\newcommand{\ud}{\mathrm{d}}
\newcommand{\cD}{{\mathcal D}}
\newcommand{\cN}{\mathcal{N}}
\newcommand{\rz}{{\mathbb R}}
\newcommand{\nz}{{\mathbb N}}
\DeclareMathOperator{\orr}{o}
\DeclareMathOperator{\supp}{supp}
\newcommand{\eins}{\mathds{1}}
\numberwithin{equation}{section}
\newtheorem{theorem}{Theorem}[section]
\newtheorem{prop}[theorem]{Proposition}
\newtheorem{cor}[theorem]{Corollary}
\theoremstyle{definition}
\numberwithin{equation}{section}
\begin{document}
\title[Attractive conical surfaces]{Attractive conical surfaces create infinitely many bound states}
\author[S. Egger]{Sebastian Egger}
\address[S. Egger]{Department of Mathematics, Technion-Israel Institute of Technology 629 Amado Building, Haifa 32000, Israel}
\email{egger@tx.technion.ac.il}
\author[J. Kerner]{Joachim Kerner}
\address[J. Kerner]{Department of Mathematics and Computer Science, FernUniversit\"{a}t in Hagen, 58084 Hagen, Germany}
\email{Joachim.Kerner@fernuni-hagen.de}
\author[K. Pankrashkin]{Konstantin Pankrashkin}
\address[K. Pankrashkin]{D\'epartement de math\'ematiques, Univ. Paris-Sud, CNRS, Universit\'e Paris-Saclay,
91405 Orsay Cedex, France}
\email{konstantin.pankrashkin@math.u-psud.fr}
\begin{abstract} 
In this paper we study spectral properties of a three- \linebreak dimensional Schr\"odinger operator $-\Delta+V$ with a potential $V$ given, modulo rapidly decaying terms, by a function of the distance of $x \in \mathbb{R}^3$ to an infinite conical hypersurface
with a smooth cross-section. As a main result we show that there are infinitely many discrete eigenvalues accumulating at the bottom of the essential spectrum which itself is identified as the ground-state energy of a certain one-dimensional operator. Most importantly, based on a result of Kirsch and Simon we are able to establish the asymptotic behavior
of the eigenvalue counting function using an explicit spectral-geometric quantity associated with the cross-section.
This shows a universal character
of some previous results on conical layers and $\delta$-potentials created by conical surfaces.
%
\end{abstract}
\maketitle

\section{Introduction}

\subsection{Motivation and previous works}
In this paper we introduce a new class of long-range potentials $V$ leading to infinitely many eigenvalues below the bottom of the essential spectrum for the Schr\"odinger operator $-\Delta+V$. In addition, we estimate the accumulation rate of the eigenvalues at the bottom of the essential spectrum in terms of a certain geometric quantity.
Our work is motivated by several previous papers by various authors which appeared during the last decades and which studied particular classes of interactions, and we show that
the spectral effects observed are of a generic nature and hold in a much more general setting. In order to proceed with a more detailed discussion, let us introduce some objects.

By a conical surface $S\subset\rz^3$ we mean a Lipschitz surface invariant under the dilations, i.e. $\lambda S=S$ for any $\lambda>0$. A conical surface $S$ is uniquely determined by its cross-section $\Sigma$ given by $\Sigma:=S\,\cap\, \mathbb{S}^2$, where $\mathbb{S}^2$ is the unit sphere centered at the origin in $\rz^3$, and $S$ is recovered from $\Sigma$ by $S:=\rz_+\Sigma$.
By $d_S:\rz^3\to \rz_+$ we denote the distance function to $S$, i.e. ``$d_S(x):=$ the distance from $x$ to $S$''. In the present paper we deal with the Schr\"odinger operator
$H=-\Delta+V$, where the potential $V$ writes as $V(x)=v\big(d_S(x)\big) + w(x)$ with $v$ being a one-dimensional potential and $w$ being small in a suitable sense.
We show that, under suitable assumptions on $S$, $v$ and $w$, the essential spectrum of $H$ covers a half-axis $[\varepsilon_0,+\infty)$, where $\varepsilon_0$ only depends on $v$,
while the discrete eigenvalues form an infinite sequence converging to $\varepsilon_0$ with a rate controlled by the geometry of $S$.

It seems that the above spectral effect was first considered by Exner and Tater for conical layers \cite{et}. Namely, for $a>0$ consider the domain
$\Omega_a:=\big\{x\in\rz^3: d_S(x)<a\big\}$, which is called the conical layer of width $2a$ around $S$, and let $A$ be the Dirichlet Laplacian
in $\Omega_a$. With a slight abuse of interpretation, the operator $A$ corresponds to the above operator $H$ with a hard-wall potential $v$ (setting $w=0$),
\[
v(x)=\begin{cases}
0, & |x|<a,\\
+\infty, & |x|\ge a.
\end{cases}
\]
While some first results for the rotationally invariant case (i.e. when $\Sigma$ is a circle) were already obtained in \cite{et,DOBR}, we prefer to cite
directly the result obtained in \cite{OBP18} for general smooth cross-sections $\Sigma$: the essential spectrum of $A$ is $[\frac{\pi^2}{4a^2},+\infty)$
and the discrete spectrum is infinite provided that $S$ is not a plane (i.e. that $\Sigma$ is not a great circle of the unit sphere). Moreover, if
$\mathcal{N}_{\lambda}(A)$ denotes the number of discrete eigenvalues of $A$ in $(-\infty,\lambda)$, then one has the asymptotics
\begin{equation}  \label{eq-nnaa}
\cN_{\frac{\pi^2}{4a^2}-E}(A)\simeq k_S |\log E| \text{ as } E\to 0^+,
\end{equation}
where $k_S>0$ is some geometric constant given in \eqref{eqks} below. The above result belongs to a large family of
works on bound states in curved structures, see. e.g. \cite{dek,ek,es,LR}.

Another class of potential in the literature corresponds to the above operator $H$ with $v:=-\alpha \delta$ with $\alpha>0$ and $\delta$ being the Dirac $\delta$-function, which is an eminent model for so-called zero-range interactions~\cite{AGHH,Ex}.
More precisely, let $B$ be the self-adjoint operator in $\rz^3$ defined by the bilinear form
\[
b[\varphi,\varphi]=\int_{\rz^3} |\nabla \varphi|^2\ud x - \alpha \int_S \varphi^2\ud s, \quad
\varphi\in H^1(\rz^3),
\]
with $\ud s$ being the surface measure on $S$. Some preliminary results on the rotationally invariant case appeared 
in \cite{BEL,LB}, and the final result
of  \cite{OBP18} states that if $\Sigma$ is a smooth loop different from a great circle, then the essential spectrum of $B$ is $[-\frac{\alpha^2}{4},+\infty)$, while the discrete
spectrum is infinite and the asymptotics
\begin{equation}  \label{eq-nnbb}
\cN_{-\frac{\alpha^2}{4}-E}(B)\simeq k_S |\log E| \text{ as } E\to 0^+,
\end{equation}
holds with the \emph{same} constant $k_S>0$ as in \eqref{eq-nnaa}. Hence, the apparently different operators $A$ and $B$, which correspond to very different potentials $v$, share the same asymptotic behavior of the eigenvalue counting function. (Some
related results on conical domains and $\delta$-potentials can also be found e.g. in \cite{BPP,EL,OBPP}.)
Hence, the principal objective of the present work is to show that the above spectral picture is rather universal
and holds for a large class of one-dimensional potentials $v$.

\subsection{Main result}
Let us now pass to precise formulations. To keep the notation as simple as possible, we will work with real-valued Hilbert spaces. A self-adjoint and semi-bounded operator will be denoted with capital letters, e.g. $A$, and we denote by $\cD(A)$ its domain. By the corresponding small letters we denote its associated bilinear form, e.g. $a$, and by
$\cD(a)$ we mean the domain of the associated bilinear form (which may be referred to as the form domain of $A$).
The bilinear form itself will be denoted as $a[\cdot,\cdot]$ (by $a[\cdot]$ the form), the spectrum, the discrete spectrum, and the essential spectrum of $A$ will be denoted by $\sigma(A)$, $\sigma_d(A)$
and $\sigma_\text{ess}(A)$, respectively.  Also, $\mathcal{N}_{\lambda}(A)$ denotes the usual eigenvalue counting function, counting all the eigenvalues smaller or equal to $\lambda \in \mathbb{R}$ with multiplicities.

Let us start with geometric objects. Throughout the paper, we assume that $S$ is a conical surface whose cross-section $\Sigma$
is a $C^4$-smooth loop on the unit sphere. To avoid dealing with degenerate cases we assume that $\Sigma$ is a not a great circle, i.e. that $S$
is not simply a plane. We denote by $\ell$ the length of $\Sigma$, by $\mathbb{T}=\mathbb{R}/\ell\mathbb{Z}$ the circle of length $\ell$
and take an arbitrary arc-length parametrization $\Gamma$ of $\Sigma$, i.e. a $C^4$ map $\Gamma:\mathbb{T}\to \rz^3$ with $\Gamma(\mathbb{T})=\Sigma$
and $|\Gamma'|=1$. This implies the normal vector $n:=\Gamma\times \Gamma^{\prime}$
and the geodesic curvature $\kappa=\langle \Gamma \times \Gamma^{\prime\prime}\,, \Gamma^{\prime}\rangle_{\rz^3}$.
The essential object which will turn out to dictate the asymptotics of the eigenvalue counting function near the threshold to the essential spectrum is a one-dimensional Schr\"odinger operator involving in its potential part the geodesic curvature. Specifically, we introduce the geometrically induced Schr\"odinger operator $\mathcal{K}_S$ on $L^2(\mathbb{T})$ through
\begin{equation*}
\begin{split}
&(\mathcal{K}_S\varphi)(s):=-\varphi''(s)-\frac{\kappa^2(s)}{4}\varphi(s)\,,\\
&\quad \mathcal{D}(K_S):=H^2(\mathbb{T})\ .
\end{split}
\end{equation*}
Then $\mathcal{K}_S$ is self-adjoint, semibounded from below with compact resolvent. Let $\lambda_j(\mathcal{K}_S)$ denote its $j$-th eigenvalue (being enumerated in increasing order and counting multiplicities) then the following quantity is defined,
\begin{equation} 
   \label{eqks}
k_S:=\frac{1}{2\pi}\sum_{j \in \mathbb{N}: \, \lambda_j(\mathcal{K}_S) < 0}\sqrt{-\lambda_j(\mathcal{K}_S)},
\end{equation}
and $k_S>0$ under the above assumptions on $S$, as discussed in \cite{OBP18}.

Furthermore, let $v:\rz\to \rz$ be a one-dimensional potential with the following properties:
\begin{enumerate}
	\item[(i)] $v \in L^{1}_\text{loc}(\mathbb{R})$, $v(x)=v(-x)$ and $\max\{-v,0\}\in L^{\infty}(\mathbb{R})$\, ,
	\item[(ii)] The one-dimensional operator $Q:=-\frac{\ud^2}{\ud x^2}+v$ in $L^2(\rz)$
	defined through its form
	\[
	\begin{split}
	&q[\varphi]:=\int_{\mathbb{R}} \big(|\varphi'|^2+v|\varphi|^2\big)\, \ud x\, ,\\
	&\quad\cD(q):=\Big\{\varphi\in H^1(\rz): \int_{\rz} v|\varphi|^2\, \ud x<+\infty\Big\}\, ,
	\end{split}
	\]
	is such that $\inf \sigma(Q)=\varepsilon_0$ is an isolated eigenvalue,
	\item[(iii)] there holds 
	\begin{equation}
	\label{vvv}
	\varepsilon_0 < \liminf_{x \rightarrow \infty} v(x):=v_{\infty}\, .
	\end{equation}
	\end{enumerate}
It is widely known that the above assumptions are satisfied if, for example, $\liminf_{x\to+\infty} v(x)=+\infty$
or if $v\in L^1(\rz)\cap L^\infty(\rz)$ with $\int_{\rz} v\,\ud x<0$.

Now, let $V\in L^{1}_\text{loc}(\rz^3)$ with $\max\{-V,0\}\in L^{\infty}(\rz^3)$ such that the function
\[
w:x\mapsto V(x)- v\big(d_S(x)\big)
\]
satisfies
\begin{equation}
\label{decay}
w(x)=\orr(|x|^{-2})\text{ as } |x|\rightarrow+\infty.
\end{equation}
Finally, consider the Sch\"odinger operator $H=-\Delta+V$
realized via its associated quadratic form $h$,
\begin{equation}
\label{quadh}
\begin{split}
&h[\varphi]=\int_{\mathbb{R}^3} \left(|\nabla \varphi|^2 +V |\varphi|^2\right)\, \ud x\,,\\
&\quad \cD(h)=\{\varphi \in H^1(\mathbb{R}^3): \ \int_{\mathbb{R}^3} V|\varphi|^2 \, \ud x < \infty \}\,. 
\end{split}
\end{equation}

Our main result is as follows:

\begin{theorem}\label{MainTheorem} The essential spectrum of $H$ is $[\varepsilon_0,\infty)$ and the discrete spectrum is infinite.
In particular, $\mathcal{N}_{\varepsilon_0-E}(H)\simeq k_S|\ln E|$ as $E \rightarrow 0^+$.
\end{theorem}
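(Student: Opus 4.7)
The plan is to follow and adapt the strategy of \cite{OBP18}, where the conical layer and $\delta$-interaction cases were treated, to the class of one-dimensional potentials $v$ satisfying (i)--(iii). The argument naturally splits into three parts: identifying the essential spectrum, proving that infinitely many eigenvalues accumulate below $\varepsilon_0$, and establishing the precise logarithmic asymptotics by reducing the problem to an effective operator on the cone to which a theorem of Kirsch--Simon applies.

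For the essential spectrum, I would first establish $[\varepsilon_0,\infty)\subset \sigma_\text{ess}(H)$ by constructing Weyl sequences of the form $\varphi_n(x)=\psi\bigl(d_S(x)\bigr)\,\chi_n(x)\,\ue^{\ui \vn\cdot x}$, where $\psi$ is the normalized ground state of $Q$ (so the transverse part contributes $\varepsilon_0$), $\chi_n$ is a smooth cutoff supported along a receding portion of a generator of $S$, and $\vn$ is a wave vector tangent to $S$ used to reach arbitrary values in $[\varepsilon_0,\infty)$. The decay $w(x)=\orr(|x|^{-2})$ together with the smoothness of the tubular coordinates ensures that the contributions of $w$ and of the geometric correction terms vanish in the limit. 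For the converse inclusion $\sigma_\text{ess}(H)\subseteq[\varepsilon_0,\infty)$, I would use a Persson-type characterization combined with IMS localization: outside a large ball $H$ is bounded below by $-\De+v(d_S)+\orr(1)$; near $S$ the transverse part is essentially $Q$ with infimum $\varepsilon_0$, while far from $S$ the bound $v(d_S)\geq v_\infty>\varepsilon_0$ provided by (iii) applies.

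For the logarithmic asymptotics, which forms the bulk of the proof, I would first use Dirichlet--Neumann bracketing to reduce the counting problem to a tubular neighborhood $\cT_\de:=\{x:d_S(x)<\de\}$ of $S$, intersected with the exterior of a large ball; on the complement, (iii) yields a strictly positive spectral gap above $\varepsilon_0$, so no eigenvalues accumulate there. Inside the tube I would pass to cone-adapted coordinates $(t,\rho,s)\in(-\de,\de)\times(R,\infty)\times\mathbb{T}$, with $t=\pm d_S$ the signed transverse distance, $\rho$ the radial coordinate along $S$ from the apex, and $s$ an arc-length parameter on $\Sigma$. The standard unitary transformation trivializing the volume element generates the geodesic-curvature correction $-\kappa^2(s)/(4\rho^2)$ at leading order, with remainders of order $\rho^{-3}$. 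Projecting onto the ground state $\psi$ of $Q$, and controlling the coupling to its orthogonal complement by the spectral gap assumed in (ii)--(iii), yields, after a further conjugation by $\sqrt{\rho}$, an effective operator on $L^2\bigl((R,\infty)\times\mathbb{T}\bigr)$ of the form
\[
\varepsilon_0-\dpr_\rho^2+\frac{1}{\rho^2}\Bigl(\mathcal{K}_S-\tfrac{1}{4}\Bigr)+\orr(\rho^{-3}).
\]
Decomposing over the eigenbasis of $\mathcal{K}_S$ gives, for each $j$ with $\lambda_j(\mathcal{K}_S)<0$, a radial inverse-square operator $-\dpr_\rho^2+\bigl(\lambda_j(\mathcal{K}_S)-\tfrac{1}{4}\bigr)/\rho^2$ on a half-line. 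The Kirsch--Simon theorem then supplies the asymptotics $\tfrac{1}{2\pi}\sqrt{-\lambda_j(\mathcal{K}_S)}\,|\ln E|$ for each such channel, and summing over these channels reproduces exactly the constant $k_S$ of \eqref{eqks}.

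The main technical obstacle I anticipate is the sharp control of errors in this effective reduction: one has to verify that the off-diagonal couplings to excited $Q$-states, the subleading geometric remainders from the coordinate change, the perturbation $w$, and the boundary effects of the bracketing cutoffs all produce only subleading corrections to $\cN_{\varepsilon_0-E}(H)$. This is handled by sandwiching $\cN_{\varepsilon_0-E}(H)$ between matching upper and lower bounds obtained from Dirichlet and Neumann bracketing on tubes of growing radius, and by using that all remainder terms decay strictly faster than $\rho^{-2}$ and hence do not affect the logarithmic Kirsch--Simon asymptotics. The infiniteness of the discrete spectrum is then immediate from $k_S>0$.
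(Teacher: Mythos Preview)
Your overall strategy---tubular coordinates near $S$, Dirichlet/Neumann bracketing, separation of variables into a transverse one-dimensional problem and a longitudinal inverse-square problem, and then Kirsch--Simon---is exactly that of the paper. But the reduction you describe has a gap that would block the lower bound on $\cN_{\varepsilon_0-E}(H)$. You work in a \emph{fixed-width} tube $\cT_\delta=\{d_S<\delta\}$ and then ``project onto the ground state $\psi$ of $Q$''. These two choices are incompatible. On the Dirichlet tube the transverse Hilbert space is $L^2(-\delta,\delta)$ and the relevant transverse operator is the truncation $H_{\delta,D}$, whose ground state energy is $\lambda_1(H_{\delta,D})>\varepsilon_0$; the full-line ground state $\psi$ simply does not lie in the form domain. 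Separating variables therefore produces the effective radial threshold $\lambda_1(H_{\delta,D})$ rather than $\varepsilon_0$, and the Kirsch--Simon count has to be evaluated at $-E-\big(\lambda_1(H_{\delta,D})-\varepsilon_0\big)$. Since $\lambda_1(H_{\delta,D})-\varepsilon_0>0$ is a fixed constant independent of $E$, this argument yields only an $O(1)$ lower bound, not $k_S|\ln E|$. The analogous Neumann bracketing for the upper bound fails for the mirror reason: $\lambda_1(H_{\delta,N})\le\varepsilon_0$ pushes the effective threshold above the bottom of the essential spectrum of the radial operator, and the count becomes infinite.

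The paper circumvents this by letting the transverse window grow with $E$: the lower bound uses $|t|<\delta R$ with $R=K_\delta|\ln E|$, and the upper bound uses the conical region $|t|<\delta r$. The whole of Section~\ref{sec-prep} is devoted to the ingredient you are missing, namely the exponential convergence $|\lambda_1(H_{L,D/N})-\varepsilon_0|\le A e^{-aL}$ (Propositions~\ref{propneu} and~\ref{propdir}), obtained via a uniform Agmon estimate for the truncated ground states $\varphi_{L,N}$. With $L=\delta R\sim K_\delta\delta\,|\ln E|$ this threshold shift becomes $O(E^{aK_\delta\delta})=o(E)$ for $K_\delta$ large, and only then does the Kirsch--Simon asymptotics go through. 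Your proposal mentions neither the $E$-dependent widening of the tube nor these one-dimensional exponential estimates, and without them the argument does not close.
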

In fact, after a few additional preparations, the proof of Theorem~\ref{MainTheorem} appears to be very similar to the one
given in \cite{OBP18} for $\delta$-potentials. It is our main observation that the necessary technical ingredients, which are collected
in the next section (Section~\ref{sec-prep}), are still available under the above assumptions on the potentials $v$ and $V$.
The proof of \cite{OBP18} is then adapted to our setting in Sections~\ref{SectionLower} and \ref{SectionUpper},
and it is based on suitable changes of variables, domain decoupling and a repeated use of the min-max principle.

We remark that the assumptions on both $v$ on $V$ could certainly be relaxed, in particular, by replacing the semiboundedness from below by some weaker integrability-type conditions. This would require an additional rather technical discussion of semiboundedness issues for $Q$ and $H$, which we preferred to avoid in the present text. We further remark that the smoothness of the cross-section $\Sigma$ is of importance for the whole construction: in fact, even in the simplest Fichera model of a conical layer with a non-smooth cross-section, the discrete spectrum becomes at most finite as observed in~\cite{ieot}.

\section{Some preparations on one-dimensional operators}\label{sec-prep}

Let us collect some important technical ingredients for the proof of the main result. 

The analysis of the eigenvalue counting function will use the following statement from \cite{KS88}; note that $(x)_+:=\max\{x,0\}$. 
\begin{prop}\label{KirschSimon} Let $x_0 \in \mathbb{R}$ and $V:[x_0,\infty) \rightarrow \mathbb{R}$ be continuous
with $c:=\lim_{x \rightarrow \infty}x^2V(x) \in \mathbb{R}$. Then the operator $h=-\frac{\ud^2}{\ud x^2}-V$ in $L^2(x_0,\infty)$ with any self-adjoint boundary condition at $x_0$ satisfies 
	\begin{equation*}
	\mathcal{N}_{-E}(h)= \frac{1}{2\pi}\sqrt{\left(c-\tfrac{1}{4}\right)_+} |\ln E| +o\big(|\ln E|\big) \text{ as }\ E \rightarrow 0^+\, .
	\end{equation*} 
	\end{prop}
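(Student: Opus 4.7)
The plan is to reduce $h$ to the exactly solvable Hardy-type model $-\ud^2/\ud x^2 - \alpha/x^2$ and to extract the logarithm from the zeros of the Macdonald function $K_{i\mu}$ of purely imaginary order.

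First I would localize to the asymptotic region. By Dirichlet--Neumann bracketing at some point $R > x_0$, the boundary condition at $x_0$ becomes irrelevant: the compact interval $[x_0, R]$ only contributes a uniformly bounded number of eigenvalues below $-E$ (regardless of the chosen self-adjoint extension), so
\[
\cN_{-E}(h) = \cN_{-E}(h_R) + O(1),
\]
where $h_R := -\ud^2/\ud x^2 - V$ on $L^2(R, \infty)$ with, say, Dirichlet condition at $R$. Next I would sandwich $h_R$ against the inverse-square model: from $x^2 V(x) \to c$, for every $\varepsilon > 0$ one can choose $R$ so large that $(c - \varepsilon)/x^2 \le V(x) \le (c + \varepsilon)/x^2$ on $[R, \infty)$. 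Writing $L_\alpha$ for the Dirichlet realisation of $-\ud^2/\ud x^2 - \alpha/x^2$ on $[R, \infty)$, the min-max principle gives
\[
\cN_{-E}(L_{c - \varepsilon}) \le \cN_{-E}(h_R) \le \cN_{-E}(L_{c + \varepsilon}).
\]

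The core computation is the asymptotics of $\cN_{-E}(L_\alpha)$. For $\alpha > 1/4$, set $\mu := \sqrt{\alpha - 1/4}$. The eigenvalue equation $L_\alpha u = -E u$ with $E > 0$ admits, up to a scalar, a unique $L^2$-solution at infinity, namely $u_E(x) = \sqrt{x}\, K_{i\mu}(\sqrt{E}\, x)$; the Dirichlet condition at $R$ then reads $K_{i\mu}(\sqrt{E}\, R) = 0$. Using the small-argument formula
\[
K_{i\mu}(z) = \frac{\pi}{|\Gamma(1+i\mu)|\sinh(\pi\mu)} \sin\bigl(\mu \ln(2/z) + \arg\Gamma(1+i\mu)\bigr) + o(1), \qquad z \to 0^+,
\]
the zeros of $K_{i\mu}$ accumulate geometrically at the origin, with consecutive logarithms spaced by $\pi/\mu$; counting them in $[\sqrt{E}\, R, \infty)$ yields
\[
\cN_{-E}(L_\alpha) = \frac{\sqrt{\alpha - 1/4}}{2\pi}\, |\ln E| + O(1).
\]
When $\alpha \le 1/4$ the Hardy inequality forces $L_\alpha \ge 0$ and hence $\cN_{-E}(L_\alpha) = 0$, matching the $(\cdot)_+$ convention. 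Sending $\varepsilon \to 0^+$ in the sandwich closes the argument.

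The step I expect to be the most delicate is the small-argument asymptotics of $K_{i\mu}$ together with the uniformity of the $O(1)$ remainder in $\alpha$ on a neighbourhood of $c$; it is precisely this uniformity that permits the limit $\varepsilon \to 0^+$ without degrading the coefficient of $|\ln E|$. The remaining ingredients (bracketing, min-max comparison, Hardy inequality) are essentially a mechanical application of the variational principle.
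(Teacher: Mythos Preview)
The paper does not prove this proposition at all; it is simply quoted from Kirsch and Simon \cite{KS88} as an input to the later sections. So there is no ``paper's proof'' to compare against.

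That said, your sketch is a correct outline of how one proves the result, and is essentially the standard route: localize away the finite endpoint by bracketing, sandwich $V$ between $(c\pm\varepsilon)/x^2$, and reduce to the explicitly solvable Calogero/inverse-square model, whose eigenvalues are read off from the zeros of $K_{i\mu}$. One small comment: the uniformity in $\alpha$ that you flag as the delicate point is in fact not needed. For each fixed $\varepsilon>0$ you choose $R=R(\varepsilon)$ and obtain
\[
\cN_{-E}(L_{c-\varepsilon}) \le \cN_{-E}(h_R) \le \cN_{-E}(L_{c+\varepsilon}),
\]
together with $\cN_{-E}(L_{c\pm\varepsilon}) = \tfrac{1}{2\pi}\sqrt{(c\pm\varepsilon-\tfrac14)_+}\,|\ln E| + O_\varepsilon(1)$. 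Dividing by $|\ln E|$ and taking $E\to 0^+$ \emph{first}, with $\varepsilon$ and $R$ fixed, gives the sandwich
\[
\tfrac{1}{2\pi}\sqrt{(c-\varepsilon-\tfrac14)_+} \le \liminf_{E\to 0^+}\frac{\cN_{-E}(h)}{|\ln E|} \le \limsup_{E\to 0^+}\frac{\cN_{-E}(h)}{|\ln E|} \le \tfrac{1}{2\pi}\sqrt{(c+\varepsilon-\tfrac14)_+},
\]
and only then do you send $\varepsilon\to 0$. The $O_\varepsilon(1)$ remainders disappear after dividing by $|\ln E|$, so no uniformity is required.
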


Furthermore, we will need a number of facts on the truncated version of the above one-dimensional operator $Q$.	To this aim,
introduce two one-dimensional Schr\"odinger operators $H_{L,D/N}$ defined on $L^2(-L,L)$ via their associated quadratic forms 
\begin{equation*}\label{DirichletLL}
\begin{split}
&h_{L,D/N}[f]:=\int_{-L}^{L} \big(|f^{\prime}|^2+v|f|^2\big)\, \ud x\,,\\
&\quad\cD(h_{L,D}):=\Big\{f\in H^1_0(-L,L): \int_{|x|<L} v f^2\ud x < \infty \Big\},\\
&\quad\cD(h_{L,N}):=\Big\{f\in H^1(-L,L):  \int_{|x|<L} v f^2\ud x < \infty \Big\}\,.
\end{split}
\end{equation*}
Both operators, $H_{L,D}$ and $H_{L,N}$, have purely discrete spectrum and we denote the $n$-th eigenvalue as $\lambda_n(H_{L,D})$ and $\lambda_n(H_{L,N})$, respectively; here $n \in \mathbb{N}$ and we count the eigenvalues according to multiplicities. In what follows we will collect some estimates for the eigenvalues $\lambda_n(H_{L,D/N})$ as $L$ becomes large. 
The estimates are not surprizing and just correspond to what is expected, but we are not aware of a suitable general presentation in the literature.

For a self-adjoint semibounded from below operator $A$ in a Hilbert space $\mathcal H$ with  associated bilinear form $a$ we define the quantitites
\begin{equation}\label{EquatonRQ}
	\mu_j(A)=\inf_{W_j \subset \cD[a]}\sup_{\varphi \in W_j: \varphi \neq 0}\frac{a[\varphi,\varphi]}{\|\varphi\|^2_{\mathcal H}}\, , \quad j\in\nz,
	\end{equation}
	where $W_j$ denotes a $j$-dimensional subspace. The sequence $\mu_j$ is non-decreasing in $j$ and has a number of other properties. It is elementary to see that
	for two operators $A$ and $B$ with the above properties one has
	\begin{equation}
	   \label{eqab}
	\mu_j(A\oplus B)\ge \min\big\{\mu_j(A),\,\mu_1(B)\big\}.
	\end{equation}
	The well-known min-max principle states the following: if $\mu_j(A)<\inf \sigma_\text{ess}(A)$, then $\mu_j(A)$ is the $j$th eigenvalue of $A$.

\begin{prop}\label{proplim1}
For sufficiently large $L>0$ we have $\lambda_1(H_{L,N})\le\varepsilon_0$.
\end{prop}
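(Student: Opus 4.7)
The plan is to use the ground state of the full operator $Q$ as an explicit trial function in the Neumann form $h_{L,N}$ and compute the Rayleigh quotient by integration by parts.

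Let $\psi_0 \in L^2(\rz)$ be an eigenfunction of $Q$ for the eigenvalue $\varepsilon_0$. Since $\varepsilon_0$ is isolated, it is a simple eigenvalue of $Q$ (standard Perron--Frobenius / unique continuation for one-dimensional Schrödinger operators), so $\psi_0$ can be chosen strictly positive. Since $v$ is even, $x\mapsto \psi_0(-x)$ is also a normalized positive ground state, so simplicity forces $\psi_0(-x)=\psi_0(x)$. From $v\in L^1_{\mathrm{loc}}$ and the eigenvalue equation $-\psi_0''+v\psi_0=\varepsilon_0\psi_0$ one gets $\psi_0\in H^2_{\mathrm{loc}}$, so both $\psi_0$ and $\psi_0'$ are continuous. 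In particular the restriction of $\psi_0$ to $(-L,L)$ belongs to $\cD(h_{L,N})=\{f\in H^1(-L,L):\int_{|x|<L}vf^2\ud x<\infty\}$ and is a legitimate trial function.

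Next I integrate by parts in the form, using the eigenvalue equation on $(-L,L)$:
\[
h_{L,N}[\psi_0]=\int_{-L}^{L}\bigl(|\psi_0'|^2+v\psi_0^2\bigr)\,\ud x
=\varepsilon_0\int_{-L}^{L}\psi_0^2\,\ud x+\bigl[\psi_0\psi_0'\bigr]_{-L}^{L}.
\]
Evenness of $\psi_0$ implies $\psi_0(-L)=\psi_0(L)$ and $\psi_0'(-L)=-\psi_0'(L)$, so the boundary term equals $2\psi_0(L)\psi_0'(L)$. It remains to show that this boundary term is nonpositive (in fact strictly negative) for all sufficiently large $L$.

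For this I use assumption~(iii): since $\varepsilon_0<v_\infty$, there is $R>0$ such that $v(x)-\varepsilon_0>0$ for $x\ge R$. On $[R,\infty)$ the eigenfunction satisfies $\psi_0''=(v-\varepsilon_0)\psi_0>0$, so $\psi_0'$ is nondecreasing there. If $\psi_0'(x_0)\ge 0$ for some $x_0\ge R$, then $\psi_0'(x)\ge \psi_0'(x_0)\ge 0$ for $x\ge x_0$, which contradicts $\psi_0(x)\to 0$ as $x\to\infty$ (which holds because $\psi_0\in L^2$ and is eventually monotone in $|\psi_0'|$). Hence $\psi_0'(L)<0$ for every $L\ge R$, and together with $\psi_0(L)>0$ this yields $2\psi_0(L)\psi_0'(L)<0$. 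Consequently
\[
\frac{h_{L,N}[\psi_0]}{\|\psi_0\|_{L^2(-L,L)}^2}=\varepsilon_0+\frac{2\psi_0(L)\psi_0'(L)}{\|\psi_0\|_{L^2(-L,L)}^2}<\varepsilon_0
\]
for every $L\ge R$, and the min-max characterization in \eqref{EquatonRQ} gives $\lambda_1(H_{L,N})=\mu_1(H_{L,N})\le\varepsilon_0$, as required.

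The only mildly subtle point is the one-dimensional ODE analysis that forces $\psi_0'(L)<0$ for large $L$; all other ingredients (simplicity of $\varepsilon_0$, evenness of $\psi_0$, regularity enough to integrate by parts) are standard consequences of assumptions~(i)--(iii).
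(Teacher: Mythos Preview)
Your proof is correct but follows a genuinely different route from the paper. The paper argues by Neumann bracketing: it introduces the complementary Neumann operator $\widetilde H_{L,N}$ on $L^2(|x|>L)$, uses $\varepsilon_0=\inf\sigma(Q)\ge\inf\sigma(H_{L,N}\oplus\widetilde H_{L,N})=\min\{\lambda_1(H_{L,N}),\inf\sigma(\widetilde H_{L,N})\}$, and observes that assumption~(iii) forces $\inf\sigma(\widetilde H_{L,N})>\varepsilon_0$ for large $L$, so the minimum must be $\lambda_1(H_{L,N})$. Your approach is instead a direct trial-function computation: you restrict the true ground state $\psi_0$ to $(-L,L)$, integrate by parts, and show the boundary term $2\psi_0(L)\psi_0'(L)$ is negative via an ODE convexity argument ($\psi_0''=(v-\varepsilon_0)\psi_0>0$ for $x\ge R$). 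Your method has the bonus of giving a strict inequality and an explicit expression for the gap; the paper's decoupling argument is more robust in that it needs no regularity discussion of $\psi_0$ and no pointwise ODE analysis. One minor imprecision: from $v\in L^1_{\mathrm{loc}}$ you only get $\psi_0''\in L^1_{\mathrm{loc}}$, i.e.\ $\psi_0'\in W^{1,1}_{\mathrm{loc}}$, not $\psi_0\in H^2_{\mathrm{loc}}$; this is still enough for continuity of $\psi_0'$ and for the integration by parts, so the argument is unaffected.
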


\begin{proof}
Let $\widetilde{H}_{L,N}$ be the operator on $L^{2}(|x|>L)$ associated with the quadratic form
\begin{gather*}
%
\widetilde{h}_{L}[\varphi]:=\int_{L<|x|<\infty}\big((\varphi')^2+v \varphi^2\big) \, \ud x\,,\\
\cD(\widetilde{h}_{L}):=\{\varphi\in H^1(|x|>L):\ \widetilde{h}_{L}[\varphi,\varphi] < \infty  \}\,.
\end{gather*}
The min-max principle then implies that
\begin{equation}
   \label{eps00}
\varepsilon_0=\inf\sigma(Q) \geq \inf \sigma(H_{L,N}\oplus \widetilde{H}_{L,N})\,.
\end{equation}
On the other hand,
$\inf \sigma(H_{L,N} \oplus \widetilde{H}_{L,N})=\min\big\{\lambda_1(H_{L,N}), \inf\sigma(\widetilde{H}_{L,N})\big\}$, and due to the assumption~\eqref{vvv} one has $\inf\sigma(\widetilde{H}_{L,N})>\varepsilon_0$ for sufficiently large $L$. It follows from \eqref{eps00} that $\varepsilon_0\ge\lambda_1(H_{L,N})$ for large $L$. 
\end{proof}
	
\begin{prop}\label{proplim}
For any $j\in\nz$ there holds
\[
\liminf_{L\to+\infty} \lambda_j(H_{L,D/N})\ge \min\big\{\mu_j(Q),v_\infty\big\}.
\]
\end{prop}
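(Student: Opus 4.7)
The plan is to treat the Dirichlet and Neumann cases separately, applying the min-max characterization~\eqref{EquatonRQ} in each.

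The Dirichlet case is immediate: extending any $f\in\cD(h_{L,D})$ by zero to $\rz$ yields an element of $\cD(q)$ with the same $L^{2}$-norm and the same form value, so every $j$-dimensional subspace of $\cD(h_{L,D})$ embeds isometrically into $\cD(q)$. By~\eqref{EquatonRQ} this gives $\mu_j(H_{L,D})\ge \mu_j(Q)\ge \min\{\mu_j(Q),v_\infty\}$ for every $L$, which is already stronger than the stated liminf bound.

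For Neumann I argue by contradiction. Suppose there exist $\lambda^{*}<\min\{\mu_j(Q),v_\infty\}$ and a sequence $L_k\to+\infty$ with $\lambda_j(H_{L_k,N})\le\lambda^{*}$, and let $\varphi_1^{(k)},\dots,\varphi_j^{(k)}$ be $L^{2}$-orthonormal eigenfunctions with eigenvalues $\lambda_1^{(k)}\le\dots\le\lambda_j^{(k)}\le\lambda^{*}$. The strategy is to build, for $k$ large and $\eta>0$ small, a genuine $j$-dimensional subspace of $\cD(q)$ whose maximal Rayleigh quotient is at most $\lambda^{*}+O(\eta)$, which will contradict~\eqref{EquatonRQ} once $\eta$ is small enough. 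The key input is a uniform concentration estimate: pick $\delta>0$ and $R_0>0$ with $v(x)\ge \lambda^{*}+2\delta^{2}$ for $|x|>R_0$ (possible since $\lambda^{*}<v_\infty$). An Agmon-type argument -- multiplying the eigenequation by $e^{2\delta(|x|-R_0)_+}\varphi_i^{(k)}$ and integrating by parts on $(-L_k,L_k)$, where the Neumann condition annihilates the only $L_k$-dependent boundary term, while the boundary contribution at $\pm R_0$ is absorbed by a uniform local $H^{1}$-bound on $\varphi_i^{(k)}$ coming from $h_{L_k,N}[\varphi_i^{(k)}]\le\lambda^{*}$ and $\max\{-v,0\}\in L^\infty$ -- yields
\[
\int_{|x|>R}\bigl(\varphi_i^{(k)}\bigr)^{2}\,\ud x\;\le\; C\,e^{-2\delta(R-R_0)},\qquad R\ge R_0,\ i=1,\dots,j,
\]
uniformly in $k$ with $L_k>R$.

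Equipped with this, the contradiction is produced by a cutoff. Fix $\eta>0$, choose $R$ so that the right-hand side above is $\le\eta$, and take $\chi\in C_c^\infty(\rz)$ with $\chi\equiv 1$ on $[-R,R]$, $\supp\chi\subset[-R-1,R+1]$, $|\chi'|\le 2$. For $L_k>R+1$ the truncations $\psi_i^{(k)}:=\chi\varphi_i^{(k)}$ (extended by zero) lie in $\cD(q)$, and a direct integration by parts using the eigenequation yields the polarized identity
\[
q\bigl[\psi_i^{(k)},\psi_{i'}^{(k)}\bigr]=\int(\chi')^{2}\varphi_i^{(k)}\varphi_{i'}^{(k)}\,\ud x+\tfrac{\lambda_i^{(k)}+\lambda_{i'}^{(k)}}{2}\int\chi^{2}\varphi_i^{(k)}\varphi_{i'}^{(k)}\,\ud x.
\]
The concentration estimate forces the Gram matrix of $\{\psi_i^{(k)}\}_{i=1}^{j}$ to be $I+O(\eta)$ and the matrix $\bigl(\int(\chi')^{2}\varphi_i^{(k)}\varphi_{i'}^{(k)}\,\ud x\bigr)_{i,i'}$ to have operator norm $O(\eta)$; hence for $\eta$ small the $\psi_i^{(k)}$ span a $j$-dimensional subspace of $\cD(q)$ whose maximal Rayleigh quotient is at most $\lambda^{*}+O(\eta)$, contradicting $\mu_j(Q)>\lambda^{*}$ once $\eta$ is sufficiently small. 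The main obstacle is the uniform-in-$k$ concentration estimate, which works precisely because the Neumann boundary condition at $\pm L_k$ annihilates the only $L_k$-dependent boundary term in the Agmon integration by parts.
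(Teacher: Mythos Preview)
Your argument is correct, and for the Dirichlet case it coincides with the paper's. For the Neumann case, however, you take a genuinely different route. The paper argues \emph{directly} via an IMS localization: with a smooth partition $\chi_1^2+\chi_2^2=1$ rescaled to $(-L,L)$ (so that $\chi_1$ is supported in $(-2L/3,2L/3)$ and $\chi_2$ in $\{|x|>L/3\}$), one gets
\[
h_{L,N}[\varphi]\ \ge\ h_{L,D}[\chi_1^L\varphi]+h_{L,N}[\chi_2^L\varphi]-\tfrac{c}{L^2}\|\varphi\|^2,
\]
and hence $\mu_j(H_{L,N})\ge\min\{\mu_j(H_{2L/3,D}),\mu_1(\widetilde H_L)\}-c/L^2$, where $\widetilde H_L$ lives on $\{L/3<|x|<L\}$. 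The Dirichlet bound already proved handles the first entry, and $\mu_1(\widetilde H_L)\ge\inf_{|x|>L/3}v\to v_\infty$ handles the second.

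Your approach instead runs by contradiction: assuming a bad subsequence, you invoke an Agmon-type decay estimate for the low eigenfunctions (uniform in $k$, using $\lambda^*<v_\infty$ and the Neumann condition at $\pm L_k$), then transplant the truncated eigenfunctions into $\cD(q)$ to contradict $\mu_j(Q)>\lambda^*$. This is sound; the polarized IMS identity you quote is correct, and the Gram-matrix perturbation is controlled by the concentration bound together with the uniform two-sided bound $-\|v_-\|_\infty\le\lambda_i^{(k)}\le\lambda^*$. One small remark: if you do the form computation on the full interval $(-L_k,L_k)$ there is no separate boundary term at $\pm R_0$ (the weight $e^{2\delta(|x|-R_0)_+}$ is Lipschitz, so $e^{2\Phi}\varphi\in\cD(h_{L_k,N})$ and the representation theorem applies directly); your aside about absorbing such a term is unnecessary, though harmless.

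The trade-off: the paper's IMS argument is shorter, quantitative (the $c/L^2$ error is explicit), and needs no decay input---it postpones Agmon estimates to Proposition~\ref{AgmonEstimate}, where they are genuinely required. Your route is heavier here but conceptually transparent (low eigenfunctions must concentrate where $v$ is small, so they can be moved to $Q$), and the simplified Agmon bound you use is independent of the later, sharper Proposition~\ref{AgmonEstimate}.
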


\begin{proof}
First remark that both $H_{L,D/N}$ are with compact resolvents, hence, $\lambda_j(H_{L,D/N})=\mu_j(H_{L,D/N})$ for any $j\in\nz$.

As each function from the form domain of $H_{L,D}$ extends by zero to a function from the form domain of $Q$, which preserves the $L^2$-norm,
the min-max principle implies
\begin{equation}
   \label{ineqdir}
 \lambda_j(H_{L,D})=\mu_j(H_{L,D})\ge \mu_j(Q) \text{ for all } j\in\nz, \, L>0.
\end{equation}

Now, consider the case of $H_{L,N}$. We pick a pair of smooth functions $\chi_1,\chi_2 \in C^{\infty}(-1,1)$ such that $\chi_1(x)=1$ for $|x| \leq 1/3$, $\chi_2(x)=1$ for $|x| \geq 2/3$ and $\chi^2_1+\chi^2_2=1$. We set	$\chi^L_j(x):=\chi_j\left(\frac{x}{L}\right)$, and a straightforward calculation shows that 
	\begin{equation*}\begin{split}
	h_{L,N}[\varphi]\geq &h_{L,D}[\chi^L_1\varphi]+h_{L,N}[\chi^L_2\varphi]-\frac{c}{L^2}\|\varphi\|^2_{L^2(-L,+L)}\ ,
	\end{split}
	\end{equation*}
	for some constant $c > 0$. Now, introduce the self-adjoint operator $\widetilde{H}_L$ on $L^2(\Omega)$, $\Omega:=(-L,-L/3) \cup (L/3,+L)$, being associated with the quadratic form 
	\begin{equation*}\begin{split}
&\widetilde{h}_L[\varphi]:=\int_{\Omega}\left(|\nabla \varphi|^2+v(x)|\varphi|^2\right)\, \ud x\,,\\
&\quad\cD[\widetilde{h}_L]:=\{\varphi \in H^1(\Omega):\ \widetilde{h}_L[\varphi] < \infty  \}\,.
	\end{split}
	\end{equation*}
We can then define the injective map
	\begin{equation*}
	J:\cD[h_{L,N}]\rightarrow \cD[h_{2L/3,D}] \oplus \cD[\widetilde{h}_L], \quad J\varphi=(\chi^L_{1}\varphi,\chi^L_{2}\varphi),
	\end{equation*}
	which is also isometric with respect to $L^2$-norms. Employing $J$, the min-max principle implies that 
	\begin{equation*}
	\mu_j(H_{L,N}) \geq \mu_j(H_{2L/3,D}\oplus \widetilde{H}_L)-\frac{c}{L^2}.
	\end{equation*}
	Using \eqref{eqab} and then \eqref{ineqdir} we obtain
	\begin{align*}
	\mu_j(H_{L,N}) &\geq \min\big\{\mu_j(H_{2L/3,D}), \mu_1(\widetilde{H}_L)\big\}-\frac{c}{L^2},\\
	&\geq \min\big\{\mu_j(Q), \mu_1(\widetilde{H}_L)\big\}-\frac{c}{L^2}
	\end{align*}
	
	We have $\mu_1(\widetilde{H}_L)\ge \inf_{x\in\Omega} v(x)$, hence, $\liminf_{L\to+\infty}\mu_1(\widetilde{H}_L)\ge v_\infty$
	and 
	\[
	\liminf_{L\to+\infty} \lambda_j (H_{L,N})=\liminf_{L\to+\infty} \mu_j (H_{L,N})\geq \min\big\{\mu_j(Q), v_\infty\big\}. \qedhere
	\]
\end{proof}

\begin{cor}\label{corlow}
There are $\delta>0$ and $L_0>0$ such that $\lambda_2(H_{L,D/N})> \varepsilon_0+\delta$ for all $L>L_0$
\end{cor}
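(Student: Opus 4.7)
The plan is to apply Proposition~\ref{proplim} with $j=2$ and then check that the right-hand side is strictly larger than $\varepsilon_0$; once this separation is established, the $\liminf$ inequality immediately yields the required $\delta>0$ and $L_0>0$, uniformly for the Dirichlet and Neumann case.

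Concretely, Proposition~\ref{proplim} for $j=2$ gives
\[
\liminf_{L\to+\infty}\lambda_2(H_{L,D/N})\ \ge\ \min\{\mu_2(Q),\,v_\infty\},
\]
so it suffices to show that $\min\{\mu_2(Q),\,v_\infty\}>\varepsilon_0$. The strict inequality $v_\infty>\varepsilon_0$ is precisely assumption~(iii). For the other term, I would invoke the fact that the ground state of the one-dimensional Schr\"odinger operator $Q=-\frac{\ud^2}{\ud x^2}+v$ on $L^2(\rz)$ is simple. This is a classical result for 1D Schr\"odinger operators with locally integrable potentials bounded from below; it follows e.g.\ from the positivity-preservation of the semigroup $\ue^{-tQ}$, or from the elementary ODE fact that an $L^2$-eigenfunction at the bottom of the spectrum cannot change sign. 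Combined with assumption~(ii) that $\varepsilon_0$ is isolated, the variational characterisation \eqref{EquatonRQ} then forces $\mu_2(Q)$ to coincide either with the next discrete eigenvalue of $Q$ below $\inf\sigma_\text{ess}(Q)$, if one exists, or with $\inf\sigma_\text{ess}(Q)$ itself; in either case $\mu_2(Q)>\varepsilon_0$.

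Choosing any $\delta\in\big(0,\min\{\mu_2(Q),v_\infty\}-\varepsilon_0\big)$, the definition of $\liminf$ produces some $L_0>0$ with $\lambda_2(H_{L,D/N})>\varepsilon_0+\delta$ for every $L>L_0$, which is the desired conclusion. The only ingredient that goes beyond a direct appeal to Proposition~\ref{proplim} is the simplicity of the ground state of $Q$; this is what rules out a coincidence $\mu_2(Q)=\varepsilon_0$ and produces the strict gap. Everything else is a soft argument combining assumption~(iii) with the monotonicity of $\mu_j$ in $j$.
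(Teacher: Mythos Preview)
Your argument is correct and follows essentially the same route as the paper's own proof: both invoke Proposition~\ref{proplim} with $j=2$, note that $v_\infty>\varepsilon_0$ by assumption~(iii), and use the simplicity of the ground state of $Q$ to conclude $\mu_2(Q)>\varepsilon_0$. Your version simply spells out in more detail why simplicity holds and how the $\liminf$ statement yields the desired $\delta$ and $L_0$.
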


\begin{proof}
As the first eigenvalue of $Q$ is simple, one has $\mu_2(Q)>\varepsilon_0$.
In addition, one has $v_\infty>\varepsilon_0$ by assumption (iii) on $v$, and the result follows
from Proposition~\ref{proplim}.
\end{proof}

Now we provide a uniform Agmon-type estimate for $\varphi_{L,N}$, where $\varphi_{L,N}$ is the (unique, normalized) real-valued ground state for $H_{L,N}$. 
\begin{prop}[Uniform Agmon-type estimate]\label{AgmonEstimate}
For any $\theta\in(0,1)$ there exists $R>0$ and $C>0$ such that
	\begin{equation*}
	\int_{-L}^{L} e^{2\theta\Phi(x)}\varphi_{L,N}(x)^2\, \ud x < C\, , \quad \int_{\rz} e^{2\theta\Phi(x)}\varphi_{0}(x)^2\, \ud x < C\,,
	\end{equation*}
	hold for all $L > R > 0$, where
	\begin{equation*}
	\Phi(x):=\begin{cases}
	0\,,& x < R\, , \\
	\displaystyle\int_{R}^{|x|}\sqrt{v(t)-\varepsilon_0}\, \ud t\,,& R \leq|x|\, .
	\end{cases}
	\end{equation*}
\end{prop}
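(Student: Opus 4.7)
The strategy is the standard Agmon positivity argument, tailored to make the constants uniform in $L$. Writing $\varphi$ for either $\varphi_{L,N}$ or $\varphi_0$ and $\lambda$ for the corresponding eigenvalue, we first observe that $\lambda\le\varepsilon_0$ in both cases: trivially if $\varphi=\varphi_0$, and by Proposition~\ref{proplim1} once $L$ is large enough if $\varphi=\varphi_{L,N}$. By assumption~(iii) we then fix $R>0$ large enough that $v(x)\ge\varepsilon_0+\eta$ a.e.\ on $\{|x|>R\}$ for some $\eta>0$; this choice is $L$-independent and ensures that $\Phi$ is absolutely continuous on $\{|x|>R\}$ with $(\Phi')^2=v-\varepsilon_0$ a.e.

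To sidestep integrability questions for the weight $e^{\theta\Phi}$ before the estimate is known, introduce the bounded truncation $\Phi_N:=\min(\Phi,N)$, which satisfies $(\Phi_N')^2\le(\Phi')^2$ pointwise, and work with $\psi_N:=e^{\theta\Phi_N}\varphi\in H^1$. Testing the eigenvalue equation against $e^{2\theta\Phi_N}\varphi$, integrating by parts (the Neumann condition at $\pm L$, or the exponential decay of $\varphi_0$, making the boundary terms vanish), and using the algebraic identity $e^{2\theta\Phi_N}\bigl[(\varphi')^2+2\theta\Phi_N'\varphi\varphi'\bigr]=(\psi_N')^2-\theta^2(\Phi_N')^2\psi_N^2$ lead to
\begin{equation*}
\int (\psi_N')^2\,\ud x+\int\bigl(v-\lambda-\theta^2(\Phi_N')^2\bigr)\psi_N^2\,\ud x=0.
\end{equation*}
Splitting the second integral at $|x|=R$: on $\{|x|>R\}$ the bound $(\Phi_N')^2\le v-\varepsilon_0$ combined with $\lambda\le\varepsilon_0$ gives $v-\lambda-\theta^2(\Phi_N')^2\ge(1-\theta^2)\eta$; on $\{|x|<R\}$ we have $\Phi_N'=0$ and $\psi_N=\varphi$, and the pointwise bound $v\ge-\|\max\{-v,0\}\|_\infty$ from assumption~(i) controls the possibly negative contribution via the $L^2$-normalisation of $\varphi$. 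Rearranging yields a bound on $\int\psi_N^2\,\ud x$ depending only on $\theta$, $\eta$, $\varepsilon_0$ and $\|\max\{-v,0\}\|_\infty$, hence uniform in $L$ and $N$; sending $N\to\infty$ and applying monotone convergence gives both claims simultaneously.

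The principal obstacle is to arrange that every constant --- the cutoff $R$, the spectral gap $\eta$, and the final bound on $\int\psi_N^2$ --- is chosen independently of $L$, and Proposition~\ref{proplim1} is the crucial ingredient that makes this possible by converting the \emph{a priori} $L$-dependent eigenvalue $\lambda_1(H_{L,N})$ into something controlled by the fixed quantity $\varepsilon_0$. A minor secondary point is the integration by parts for $\varphi_0$ on all of $\rz$, which one justifies either by exhaustion with a further cutoff or by invoking the exponential decay of $\varphi_0$ and $\varphi_0'$ coming from elementary ODE asymptotics for $L^2$-eigenfunctions below $v_\infty$.
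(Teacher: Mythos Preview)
Your argument is correct and follows the same Agmon positivity scheme as the paper, but with two technical simplifications worth noting. First, you introduce the truncation $\Phi_N=\min(\Phi,N)$ so as to treat $\varphi_{L,N}$ and $\varphi_0$ simultaneously and avoid any a~priori integrability issues; the paper instead exploits that $\Phi$ is already bounded on $(-L,L)$ and shows directly that $e^{\theta\Phi}\varphi_{L,N}\in\cD(h_{L,N})$, treating only the finite-interval case explicitly. Second, and more substantively, on the inner region $\{|x|<R\}$ the paper controls the contribution by invoking the min-max principle together with Proposition~\ref{proplim} (to get $\lambda_1(H_{R,N})>\varepsilon_0-\delta$ for large $R$), whereas you use only the crude pointwise bound $v\ge-\|\max\{-v,0\}\|_\infty$ from assumption~(i) and the normalisation $\|\varphi\|=1$. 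Your route is therefore slightly more elementary, needing Proposition~\ref{proplim1} but not Proposition~\ref{proplim}; the paper's route yields a marginally sharper constant but at the cost of an extra preparatory lemma. Both arrive at the same uniform bound.
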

\begin{proof}
During the proof we simply write $\varphi$ instead of $\varphi_{L,N}$.

Take a sufficiently large $R>0$ such that $v(x)> \varepsilon_0$ for a.e. $x\ge R$.
Then $\Phi \in L^{\infty}(-L,L)$ and $\big|\Phi^{\prime}\big| \leq \eins_{\{L>|x|> R\}}\,\sqrt{v-\varepsilon_0}$,
where $\eins_{A}$ stands for the indicator function of the set $A \subset \mathbb{R}$.

Let us first show that for any $\eta>0$ one has $e^{\eta\Phi}\varphi  \in \cD(h_{L,N})$.
We have $e^{\eta\Phi}\in L^\infty(-L,L)$, so $e^{\eta\Phi}\varphi \in L^2(-L,L)$, and
\[
\int_{-L}^L v (e^{\eta\Phi}\varphi)^2\, \ud x<\infty \quad \text{ due to } \quad \int_{-L}^L v \varphi^2\, \ud x<\infty\,.
\]
Furthermore, $(e^{\eta\Phi}\varphi)'=\eta\Phi' e^{\eta\Phi}\varphi+ e^{\eta\Phi}\varphi'$.
The second summand is in $L^2(-L,L)$ due to $\varphi\in H^1(-L,L)$, while the first term is finite
due to
\[
\int_{-L}^L(\Phi' e^{\eta\Phi}\varphi)^2\, \ud x \le e^{2\eta C_L} \int_{R<|x|<L}(v-\varepsilon_0)\varphi^2\, \ud x<\infty\,,
\quad C_L:=\|\Phi\|_{L^\infty(-L,L)}.
\]
This implies that $e^{\eta\Phi}\varphi \in H^1(-L,+L)$.

In a next step we compute
\begin{equation*}
	\begin{split}
	h_{L,N}[e^{\theta \Phi}\varphi]&=\int_{-L}^L \Big( \big(\theta \Phi^{\prime} e^{\theta\Phi}\varphi+e^{\theta\Phi}\varphi'\big)^2 +v (e^{\theta \Phi}\varphi)^2\Big)\, \ud x\\
    &=\theta^2 \int_{-L}^L(\Phi^{\prime})^2e^{2\theta \Phi}\varphi^2\, \ud x + \int_{-L}^L\Big((e^{2\theta \Phi}\varphi)' \varphi' +v \,e^{2\theta\Phi}\varphi\,\varphi\, \Big)\,\ud x\,.
\end{split}
\end{equation*}
Taking advantage of the representation theorem for quadratic forms, the second term  on the right-hand side rewrites as
\begin{multline*}
\int_{-L}^L\Big((e^{2\theta \Phi}\varphi)' \varphi' +v \,e^{2\theta\Phi}\varphi\,\varphi\, \Big)\,\ud x\, 
=h_{L,N}\big[e^{2\theta\Phi}\varphi,\varphi\big]\\
=\langle e^{2\theta\Phi}\varphi, H_{L,N}\varphi\rangle_{L^2(-L,L)}
= \lambda_{1}(H_{L,N}) \int_{-L}^L e^{2\theta\Phi} \varphi^2\ud x
\end{multline*}
which then yields 
\[
%
h_{L,N}[e^{\theta\Phi}\varphi]=\int_{-L}^L\Big( \theta^2 (\Phi')^2 + \lambda_1(H_{L,N}) \Big) e^{2\theta\Phi}\varphi^2\,\ud x.
\]
Using Proposition~\eqref{proplim1} we conclude that for large $L$ we have
\begin{equation}
\label{q01}
h_{L,N}[e^{\theta\Phi}\varphi]\le \int_{-L}^L\big( \theta^2 (\Phi')^2 + \varepsilon_0 \big) e^{2\theta\Phi}\varphi^2\,\ud x\,.
\end{equation}

Now, let us pick a $\delta>0$.
Recall that by Proposition~\ref{proplim} we may assume $R$ sufficiently large to have
\begin{equation}
  \label{eq-lll}
\lambda_1(H_{R,N})>\varepsilon_0-\delta.
\end{equation}
The min-max principle applied to $h_{R,N}$ gives, with the help of \eqref{eq-lll},
\begin{multline*}
\int_{-R}^R\Big( \big((e^{\theta \Phi}\varphi)'\big)^2+v(e^{\theta \Phi}\varphi)^2 \Big)\, \ud x\\
\geq \lambda_1(H_{R,N})\int_{-R}^R e^{2\theta\Phi}\varphi^2 \, \ud x
\geq (\varepsilon_0-\delta)\int_{-R}^R e^{2\theta\Phi}\varphi^2 \, \ud x.
\end{multline*}
%
%
Therefore,
\begin{equation}
\begin{split}
h_{L,N}[e^{\theta\Phi}\varphi]&=\bigg(\int_{-R}^R+ \int_{R<|x|<L}\bigg)
\Big( \big((e^{\theta\Phi}\varphi)'\big)^2+v(e^{\theta\Phi}\varphi)^2 \Big) \, \ud x\\
&\ge (\varepsilon_0-\delta) \int_{-R}^{R}e^{2\theta\Phi}\varphi^2 \, \ud x + \int_{R<|x|<L} ve^{2\theta\Phi}\varphi^2 \, \ud x.
%
\end{split}
   \label{eq123}
\end{equation}
By combining \eqref{q01} with \eqref{eq123} we arrive at
\begin{multline*}
\int_{-L}^L\Big( \theta^2 (\Phi')^2 + \varepsilon_0\Big) e^{2\theta\Phi}\varphi^2\,\ud x\\
\geq (\varepsilon_0-\delta)\int_{-R}^{R}e^{2\theta\Phi}\varphi^2 \, \ud x + \int_{R<|x|<L} ve^{2\theta\Phi}\varphi^2 \, \ud x.
\end{multline*}
As $\Phi'=0$ in $(-R,R)$ by construction, the last inequality transforms into
\begin{align*}
\delta \int_{-R}^{R}e^{2\theta\Phi}\varphi^2 \, \ud x
&\ge
\int_{R<|x|<L} \big(v-\varepsilon_0-\theta^2(\Phi')^2\big) e^{2\theta\Phi}\varphi^2 \, \ud x.
\end{align*}
Using $\big|\Phi^{\prime}\big| \leq \eins_{\{L>|x|> R\}}\,\sqrt{v-\varepsilon_0}$ 
 we arrive at
\begin{align*}
\delta\int_{-R}^{R} \varphi^2 \, \ud x &\ge
(1-\theta^2)\int_{R<|x|<L} (v-\varepsilon_0) e^{2\theta\Phi}\varphi^2 \, \ud x.
\end{align*}
We may fix $\delta>0$ sufficiently small to have $v_\infty-\varepsilon_0 >2\delta (1-\theta^2)^{-1}$,
then for large (but then fixed) $R$ one has $(1-\theta^2)\big(v(x)-\varepsilon_0\big)>\delta$ for a.e. $R<|x|<L$
and it follows from the preceding inequality that
\[
\int_{-R}^{R} \varphi^2 \, \ud x\ge \int_{R<|x|<L}  e^{2\theta\Phi}\varphi^2 \, \ud x\,.
\]
Consequently,
\begin{multline*}
\int_{-L}^{L}e^{2\theta\Phi}\varphi^2\, \ud x=\int_{-R}^{R}e^{2\theta\Phi}\varphi^2\, \ud x
+\int_{R<|x|<L}e^{2\theta\Phi}\varphi^2\, \ud x\\
=\int_{-R}^{R}\varphi^2\, \ud x+\int_{R<|x|<L}e^{2\theta\Phi}\varphi^2\, \ud x 
\le 2\int_{-R}^{R}\varphi^2\ud x. \qedhere
\end{multline*}
\end{proof}
\begin{cor}
\label{expdecay}
For every $\eta>0$ and $0<\theta<1$ there exist $B>0$ and $b>0$ such that 
\begin{equation*}
\|\varphi_{L,N}\|_{L^2(L-\eta<|x|<L)}\le B e^{-b L}\|\varphi_{L,N}\|_{L^2(-L,L)}
\end{equation*}
as $L$ is sufficiently large.
\end{cor}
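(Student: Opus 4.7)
The plan is to deduce the corollary as a direct consequence of the Uniform Agmon-type estimate in Proposition~\ref{AgmonEstimate}, by exploiting the linear growth of the weight $\Phi$ in the tails.

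Since $\varphi_{L,N}$ is normalized by assumption, $\|\varphi_{L,N}\|_{L^2(-L,L)}=1$ and the right-hand side of the claimed inequality simplifies to $Be^{-bL}$. Given $\theta \in (0,1)$ as in the statement, I would apply Proposition~\ref{AgmonEstimate} with this $\theta$ to obtain constants $R>0$ and $C>0$ such that
\[
\int_{-L}^{L} e^{2\theta\Phi(x)}\varphi_{L,N}(x)^2 \,\ud x < C
\]
uniformly for all sufficiently large $L$.

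The key step is then a suitable pointwise lower bound for $\Phi$ on the shell $L-\eta<|x|<L$. By assumption~(iii) we have $v_\infty>\varepsilon_0$, so enlarging $R$ if necessary I may assume that $v(t)-\varepsilon_0 \geq c_0^2$ for a.e.\ $t\geq R$ with some $c_0>0$. Hence for $L>R+\eta$ and $|x|\in(L-\eta,L)$,
\[
\Phi(x)\geq \int_R^{L-\eta}\sqrt{v(t)-\varepsilon_0}\,\ud t \geq c_0(L-\eta-R).
\]
Inserting this pointwise bound into the Agmon estimate yields
\[
e^{2\theta c_0(L-\eta-R)}\int_{L-\eta<|x|<L}\varphi_{L,N}^2\,\ud x \leq \int_{-L}^L e^{2\theta\Phi(x)}\varphi_{L,N}^2\,\ud x \leq C,
\]
and the corollary follows on taking square roots and setting $b:=\theta c_0$ and $B:=\sqrt{C}\,e^{\theta c_0(\eta+R)}$.

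There is essentially no substantial obstacle: the nontrivial analytic work has already been carried out in Proposition~\ref{AgmonEstimate}, and what remains is merely the observation that the Agmon weight $\Phi$ grows at least linearly at infinity, so that $e^{2\theta\Phi}$ supplies an exponentially large factor on the tail shell which must be compensated by an exponentially small $L^2$-mass of $\varphi_{L,N}$ there.
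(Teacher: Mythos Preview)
Your proof is correct and follows essentially the same approach as the paper's: both deduce the corollary directly from Proposition~\ref{AgmonEstimate} by observing that the Agmon weight $\Phi$ grows at least linearly at infinity (thanks to assumption~(iii) on $v$), so that the uniform bound on $\int e^{2\theta\Phi}\varphi_{L,N}^2$ forces exponentially small mass on the tail shell. Your presentation is in fact slightly more explicit about the constants than the paper's, but the logic is identical.
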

\begin{proof}
Starting with the definition of $\Phi$ in Proposition~\ref{AgmonEstimate} we see that
$\Phi(x)>\delta(|x|-R)$ for some $\delta>0$ and $|x|>R$. Proposition \ref{AgmonEstimate} then yields
\begin{equation*}
\begin{split}
\|\varphi_{L,N}\|_{L^2(L-\eta<|x|<L)}&\leq\|e^{-\theta\Phi}e^{\theta\Phi}\varphi_{L,N}\|_{L^2(L-\eta<|x|<L)}\\
&\leq e^{-\theta\delta(L-\eta)}\|e^{\theta\Phi}\varphi_{L,N}\|_{L^2(L-\eta<|x|<L)}\\
&\leq e^{-\theta\delta(L-\eta)}\|e^{\theta\Phi}\varphi_{L,N}\|_{L^2(-L,L)}\\
&\leq C e^{-\theta\delta L} \|\varphi_{L,N}\|_{L^2(-L,L)}. \qedhere
\end{split}
\end{equation*}
\end{proof}
The next result controls the convergence of the eigenvalues $\lambda_1(H_{L,N})$ to $\varepsilon_0$.
\begin{prop}\label{propneu} With some $A>0$ and $a>0$ one has 
	\begin{equation}
	\varepsilon_0-A\mathrm{e}^{-a L} \leq \lambda_1(H_{L,N}) \leq \varepsilon_0. 
	\end{equation}
	holds for large $L$.
	\end{prop}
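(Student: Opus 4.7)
The upper bound $\lambda_1(H_{L,N}) \le \varepsilon_0$ is precisely Proposition~\ref{proplim1}, so the task reduces to establishing the exponentially small lower bound. The plan is to feed a truncation of the Neumann ground state $\varphi_{L,N}$ into the form $q$ of the full-line operator $Q$, and then exploit the variational characterization $\varepsilon_0 = \inf_{\psi \in \cD(q)} q[\psi]/\|\psi\|^2$ together with the exponential decay of $\varphi_{L,N}$ near the endpoints provided by Corollary~\ref{expdecay}.

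Concretely, I would fix some $\eta>0$ and choose a smooth cut-off $\chi_L$ on $(-L,L)$ with $\chi_L=1$ on $\{|x|\le L-\eta\}$, $\chi_L(\pm L)=0$, and $\|\chi_L'\|_{L^\infty}\le C/\eta$. Since $\chi_L$ vanishes at $\pm L$, the function $\chi_L\varphi_{L,N}$ extended by zero lies in $H^1(\rz)$, and membership $\chi_L\varphi_{L,N}\in\cD(q)$ is then immediate from $\varphi_{L,N}\in\cD(h_{L,N})$. The auxiliary function $\chi_L^2\varphi_{L,N}$ likewise lies in $\cD(h_{L,N})$ and is therefore admissible in the weak eigenvalue equation $h_{L,N}[\chi_L^2\varphi_{L,N},\varphi_{L,N}]=\lambda_1(H_{L,N})\langle\chi_L^2\varphi_{L,N},\varphi_{L,N}\rangle$. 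A one-line IMS-type expansion of $|(\chi_L\varphi_{L,N})'|^2$ combined with this weak form relation yields the key identity
\[
q[\chi_L\varphi_{L,N}] \,=\, \lambda_1(H_{L,N})\,\|\chi_L\varphi_{L,N}\|^2 \,+\, \int_{-L}^{L}(\chi_L')^2\,\varphi_{L,N}^2\,\ud x.
\]

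Applying the variational bound $q[\chi_L\varphi_{L,N}]\ge\varepsilon_0\|\chi_L\varphi_{L,N}\|^2$ and rearranging gives
\[
\lambda_1(H_{L,N}) \,\ge\, \varepsilon_0 \,-\, \frac{\int_{-L}^{L}(\chi_L')^2\,\varphi_{L,N}^2\,\ud x}{\|\chi_L\varphi_{L,N}\|^2}.
\]
Since $\chi_L'$ is supported in $\{L-\eta<|x|<L\}$ with $|\chi_L'|\le C/\eta$, the numerator is bounded by $(C/\eta)^2\,\|\varphi_{L,N}\|_{L^2(L-\eta<|x|<L)}^2$, which is controlled by $(C/\eta)^2 B^2 e^{-2bL}$ via Corollary~\ref{expdecay} applied to the normalised ground state. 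The denominator satisfies $\|\chi_L\varphi_{L,N}\|^2\ge 1 - B^2 e^{-2bL}\ge 1/2$ for $L$ large. Together these yield $\lambda_1(H_{L,N})\ge\varepsilon_0-A\mathrm{e}^{-aL}$ with $a=2b$ and a suitable constant $A>0$.

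I do not anticipate any serious obstacle: the heavy lifting—namely the uniform weighted Agmon estimate of Proposition~\ref{AgmonEstimate} and its corollary—has already been done, and the remaining IMS-style manipulation is standard. The only point needing genuine care is verifying the key identity displayed above, which rests on the vanishing of boundary terms at $\pm L$ coming from $\chi_L(\pm L)=0$ and on the admissibility of $\chi_L^2\varphi_{L,N}$ as a test function; both checks are routine given the form-domain hypotheses on $v$.
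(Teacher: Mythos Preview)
Your argument is correct and follows the same overall strategy as the paper: truncate the Neumann ground state by a cut-off supported away from $\pm L$, feed it into the full-line form $q$ to invoke the variational lower bound $\varepsilon_0$, and control the localization error through Corollary~\ref{expdecay}. The paper carries this out with a two-function partition $\chi_{1,L}^2+\chi_{2,L}^2=1$ and handles the boundary piece $h_{L,N}[\chi_{2,L}\varphi_{L,N}]$ separately via assumption~(iii) (namely $v>\varepsilon_0$ near the endpoints), whereas your single-cut-off identity $q[\chi_L\varphi_{L,N}]=\lambda_1(H_{L,N})\|\chi_L\varphi_{L,N}\|^2+\int(\chi_L')^2\varphi_{L,N}^2$ bypasses that second piece at the price of dividing by $\|\chi_L\varphi_{L,N}\|^2$ and bounding it from below. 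This is a mild technical streamlining rather than a genuinely different route; both versions rest on the same ingredients and the same Agmon input.
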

	\begin{proof}
	The upper bound is already shown in Proposition~\ref{proplim1}.
	To prove the lower bound, let us take  $\chi\in C^{\infty}(0<|x|<1)$ with $\chi(x)=1$ for $0<|x|<\frac{1}{3}$, $\chi(x)=0$ for $\frac{2}{3}<|x|<1$ and $|\chi(x)| \leq 1$ for all $0 < x < 1$. We introduce $\xi_{1,L}\in C^{\infty}(-L,L)$ via
	\begin{equation}
	\label{phi1}
	\xi_{1,L}(x):=\begin{cases}
	1\ , & |x| \leq L-1\ ,\\
	\chi(|x|-L+1)\ , & L-1 < |x| < L\ .
	\end{cases}
	\end{equation}
	We also define $\xi_{2,L}:=1-\xi_{1,L}$ and set
	\begin{equation}
	\chi_{j,L}:=\frac{\xi_{j,L}}{\sqrt{\xi^2_{1,L}+\xi^2_{2,L}}}\ , \quad j=1,2\ .
	\end{equation}
	
	Note that $\|\chi_{j,L}'\|_{\infty}$ is uniformly bounded for large enough $L$, $j=1,2$, and $\supp(\chi_{1,L}'+\chi_{2,L}')\subset[L-\tfrac{2}{3},L-\tfrac{1}{3}]$. With this and Corollary~\ref{expdecay} we conclude that there are two constants $C_1,C_2>0$ such that 
	\begin{equation*}
	\begin{split}
	h_{L,N}[\varphi_{L,N}]&=h_{L,N}[\chi_{1,L}\varphi_{L,N}]+h_{L,N}[\chi_{2,L}\varphi_{L,N}]\\
	&\quad-\int_{-L}^{L}\left[((\chi_{1,L})^{\prime})^2+ ((\chi_{2,L})^{\prime})^2\right]
	\varphi_{L,N}^2\, \ud x \\
	&=h_{L,N}[\chi_{1,L}\varphi_{L,N}]+h_{L,N}[\chi_{2,L}\varphi_{L,N}]\\
	&\quad -\int_{L-\frac{2}{3}<|x|<L}\left[((\chi_{1,L})^{\prime})^2+ ((\chi_{2,L})^{\prime})^2\right]
	\varphi_{L,N}^2\, \ud x \\
	&\geq h_{L,N}[\chi_{1,L}\varphi_{L,N}]+h_{L,N}[\chi_{2,L}\varphi_{L,N}] -C_2e^{-C_1L}\|\varphi_{L,N}\|^2_{L^2(-L,L)}\ .
	\end{split}
	\end{equation*}
	Since $(\chi_{1,L}\varphi_{L,N})(x)=0$ for $|x|\ge L-\frac{1}{3}$ we conclude that
	$\chi_{1,L}\varphi_{L,N}\in \cD(q)$ and 
	\begin{equation*}\begin{split}
	h_{L,N}[\chi_{1,L}\varphi_{L,N}]=q[\chi_{1,L}\varphi_{L,N}]\ge\varepsilon_0 \|\chi_{1,L}\varphi_{L,N}\|^2_{L^2(-L,L)}\,.
	\end{split}
	\end{equation*}
	Using \eqref{vvv}, one can choose $L$ sufficiently large to have
	$v> \varepsilon_0$ for $|x|>L-\tfrac{2}{3}$. Since {$\supp \left(\chi_2\varphi_{L,N}\right)\subset[L-\tfrac{1}{3},L]$}
	there holds
	\begin{multline*}
	h_{L,N}[\chi_{2,L}\varphi_{L,N}]\geq\int_{L-\frac{1}{3}}^L v \,(\chi_{2,L}\varphi_{L,N})^2\, \ud x \\
	\geq \varepsilon_0 \int_{L-\frac{1}{3}}^L (\chi_{2,L}\varphi_{L,N})^2\, \ud x =\varepsilon_0 \|\chi_{2,L}\varphi_{L,N}\|^2_{L^2(-L,L)}\,.
	\end{multline*}
Therefore, for large enough $L$ we have 
\begin{equation*}\begin{split}
\lambda_{1}(H_{L,N})\|\varphi_{L,N}\|^2_{L^2(-L,L)}&=h_{L,N}[\varphi_{L,N}]\\
&\geq\varepsilon_0 \|\chi_{1,L}\varphi_{L,N}\|^2_{L^2(-L,L)}+\varepsilon_0 \|\chi_{2,L}\varphi_{L,N}\|^2_{L^2(-L,L)}\\
& \qquad -C_2e^{-C_1L}\|\varphi_{L,N}\|_{L^2(-L,L)}\\
&=\big(\varepsilon_0-C_2e^{-C_1L}\big)\|\varphi_{L,N}\|^2_{L^2(-L,L)}\,. \qedhere
\end{split}
\end{equation*}
\end{proof}

\begin{prop}\label{propdir} There exist $A>0$ and $a>0$ such that 
	\begin{equation}
	\varepsilon_0\leq \lambda_1(H_{L,D}) \leq \varepsilon_0+A\mathrm{e}^{-a L}. 
	\end{equation}
	holds for large $L$.
	\end{prop}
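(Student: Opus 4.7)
The bound splits into two one-sided estimates, which I would handle separately.

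The lower bound $\lambda_1(H_{L,D})\ge\varepsilon_0$ is immediate: extension by zero gives an isometric embedding $\cD(h_{L,D})\hookrightarrow\cD(q)$, so the min-max principle yields $\lambda_1(H_{L,D})=\mu_1(H_{L,D})\ge\mu_1(Q)=\varepsilon_0$, which is precisely \eqref{ineqdir} with $j=1$.

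For the upper bound my plan is the standard cutoff-of-ground-state trick combined with Proposition~\ref{AgmonEstimate}. Let $\varphi_0$ be the normalized ground state of $Q$, and choose a smooth cutoff $\chi_L\in C_c^\infty(-L,L)$ equal to $1$ on $[-L+1,L-1]$, with $\|\chi_L'\|_\infty$ bounded uniformly in $L$ and $\supp\chi_L'\subset\{L-1<|x|<L\}$. Then $\chi_L\varphi_0\in H^1_0(-L,L)\subset\cD(h_{L,D})$ is an admissible trial function. Expanding $h_{L,D}[\chi_L\varphi_0]$ and applying the sesquilinearity identity $q[\chi_L^2\varphi_0,\varphi_0]=\varepsilon_0\|\chi_L\varphi_0\|^2$ (a direct consequence of $Q\varphi_0=\varepsilon_0\varphi_0$), one obtains the IMS-type formula
\[
h_{L,D}[\chi_L\varphi_0]=\varepsilon_0\,\|\chi_L\varphi_0\|^2_{L^2(-L,L)}+\int_{-L}^L(\chi_L')^2\,\varphi_0^2\,\ud x,
\]
so the min-max principle reduces the task to showing that the integral on the right is exponentially small compared to $\|\chi_L\varphi_0\|^2$.

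Since $\|\chi_L\varphi_0\|^2\to\|\varphi_0\|^2=1$, the denominator stays bounded below by a positive constant. The error integral is supported in $\{L-1<|x|<L\}$ with $|\chi_L'|$ uniformly bounded, hence is controlled by $\int_{|x|>L-1}\varphi_0^2\,\ud x$. Here I would invoke the Agmon-type bound $\int_\rz e^{2\theta\Phi}\varphi_0^2\,\ud x<\infty$ of Proposition~\ref{AgmonEstimate}, together with the linear lower bound $\Phi(x)\ge\delta(|x|-R)$ for $|x|>R$ (already exploited in the proof of Corollary~\ref{expdecay}), to conclude that this tail is $O(e^{-aL})$ for some $a>0$. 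Combining the pieces yields $\lambda_1(H_{L,D})\le\varepsilon_0+Ae^{-aL}$ for large $L$. The only genuinely nontrivial ingredient is the Agmon estimate itself, but that is already in place, so what remains is a short routine computation.
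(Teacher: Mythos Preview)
Your proof is correct, and in fact takes a cleaner route than the paper's. The paper proves the upper bound by working with the \emph{Neumann} ground state $\varphi_{L,N}$: it applies the IMS localization to $h_{L,N}[\varphi_{L,N}]$, extracts $h_{L,D}[\chi_{1,L}\varphi_{L,N}]\ge\lambda_1(H_{L,D})\|\chi_{1,L}\varphi_{L,N}\|^2$ on one piece, bounds the remaining piece from below by $\varepsilon_0\|\chi_{2,L}\varphi_{L,N}\|^2$, and then feeds in both Proposition~\ref{propneu} (for $\lambda_1(H_{L,N})=\varepsilon_0+O(e^{-cL})$) and Corollary~\ref{expdecay} (for the exponential smallness of the boundary mass of $\varphi_{L,N}$). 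Your approach uses the whole-line ground state $\varphi_0$ instead, so the eigenvalue identity $q[\chi_L^2\varphi_0,\varphi_0]=\varepsilon_0\|\chi_L\varphi_0\|^2$ is exact and only the Agmon tail bound on $\varphi_0$ from Proposition~\ref{AgmonEstimate} is needed; Proposition~\ref{propneu} is bypassed entirely. The paper's choice is natural in context because the $\varphi_{L,N}$ machinery has just been developed, but your argument is the more self-contained one for this specific bound.
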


\begin{proof}
The lower bound is an immediate consequence of the min-max principle.
For the upper bound we use the same functions $\chi_{j,L}$ as in the proof of Proposition~\ref{propneu}.
Applying Corollary \ref{expdecay} we estimate, with some $B>0$ and $b>0$,
		\begin{equation*}
		\begin{split}
		h_{L,N}[\varphi_{L,N}]
		&=h_{L,N}[\chi_{1,L}\varphi_{L,N}]+h_{L,N}[\chi_{2,L}\varphi_{L,N}]\\
		&\qquad-\int_{L-\frac{2}{3}<|x|<L-\frac{1}{3}}\left[(\chi'_{1,L})^2+ (\chi'_{2,L})^2\right]
	\varphi_{L,N}^2\, \ud x\\
	&\geq h_{L,D}[\chi_{1,L}\varphi_{L,N}]+h_{L,N}[\chi_{2,L}\varphi_{L,N}]
		-Be^{-bL}\|\varphi_{L,N}\|^2_{L^2(-L,L)}
		\end{split}
		\end{equation*}
By the min-max principle we have
\[
h_{L,D}[\chi_{1,L}\varphi_{L,N}]\ge \lambda_1(H_{L,D})\|\chi_{1,L}\varphi_{L,N}\|^2_{L^2(-L,L)}.
\]
Furthermore, as $\supp \chi_{2,L}\varphi_{L,N}\subset (L-1,L)$, using the assumption (iii) on $v$ we obtain
\[
h_{L,N}[\chi_{2,L}\varphi_{L,N}]\ge \inf_{|x|>L-1} v(x) \|\chi_{1,L}\varphi_{L,N}\|^2_{L^2(-L,L)}
\ge \varepsilon_0 \|\chi_{1,L}\varphi_{L,N}\|^2_{L^2(-L,L)}.
\]
Alltogether the preceding estimates give
\begin{align*}
\lambda_1(H_{L,N})\|\varphi_{L,N}\|^2_{L^2(-L,L)}
&=h_{L,N}[\varphi_{L,N}]\\
&\ge  \lambda_1(H_{L,D})\|\chi_{1,L}\varphi_{L,N}\|^2_{L^2(-L,L)}+\varepsilon_0\|\chi_{2,L}\varphi_{L,N}\|^2_{L^2(-L,L)}\\
&\qquad -Be^{-bL}\|\varphi_{L,N}\|^2_{L^2(-L,L)},
\end{align*}
which results in
\begin{multline*}
\lambda_1(H_{L,D})\|\chi_{1,L}\varphi_{L,N}\|^2_{L^2(-L,L)}\\
\le 
\big(\lambda_1(H_{L,N})+Be^{-bL}\big)\|\varphi_{L,N}\|^2_{L^2(-L,L)}
-\varepsilon_0\|\chi_{2,L}\varphi_{L,N}\|^2_{L^2(-L,L)}
\end{multline*}
and then
\[
\lambda_1(H_{L,D})\le
\big(\lambda_1(H_{L,N})+B e^{-bL}\big) \dfrac{\|\varphi_{L,N}\|^2_{L^2(-L,L)}}{\|\chi_{1,L}\varphi_{L,N}\|^2_{L^2(-L,L)}}
-\varepsilon_0 \dfrac{\|\chi_{2,L}\varphi_{L,N}\|^2_{L^2(-L,L)}}{\|\chi_{1,L}\varphi_{L,N}\|^2_{L^2(-L,L)}}
\]
With some $c>0$ we have $\lambda_1(H_{L,N})=\varepsilon_0+O(e^{-cL})$ by Proposition~\ref{propneu},
and 
\begin{align*}
\|\chi_{1,L}\varphi_{L,N}\|^2_{L^2(-L,L)}&=\big(1+O(e^{-cL})\big)\|\varphi_{L,N}\|^2_{L^2(-L,L)},\\
\|\chi_{2,L}\varphi_{L,N}\|^2_{L^2(-L,L)}&= O(e^{-cL}) \|\varphi_{L,N}\|^2_{L^2(-L,L)},
\end{align*}
by Corollary \ref{expdecay}, which gives the sought estimate.
\end{proof}

\section{Proof of Theorem~\ref{MainTheorem}: Lower bound }\label{SectionLower}
With the preparation from the preceding section, the proofs for both upper and lower bounds
follow the same steps as in \cite[Sec.~3]{OBP18}. Nevertheless we recall the general scheme.

In a first step we note that a classical result of differential geometry allows us to find $R_0,\delta_0 > 0$, $\delta_0\in(0,\kappa_{\infty}^{-1})$, such that, for all $R > R_0$ and $\delta \in (0,\delta_0)$,
\begin{equation}\label{TubularCoordinates}
\begin{split}
&\Phi: \Pi_{R,\delta}:= (R,+\infty) \times \mathbb{T} \times (-\delta R ,+\delta R) \rightarrow \rz^3\,,\\
& \quad (r,s,t) \mapsto r\Gamma(s)+tn(s) 
\end{split}
\end{equation}
is injective such that $d_S(\Phi(r,s,t))=|t|$. On the open set $\Omega_R:=\Phi(\Pi_{R,\delta})$ we define the quadratic form, for some $c_R>0$ suitably large, 
\begin{equation}
\label{forma}
\begin{split}
&a_{R,D}[\varphi]:=\int_{\Omega_R} \left(|\nabla \varphi|^2+V |\varphi|^2\right)\, \ud x\,,\\
&\quad\cD\left(a_{R,D}\right):=\{ \varphi \in H^1_0(\Omega_R): \ \sqrt{|V|}\varphi\in L^2(\Omega_R)\}\,.
\end{split}
\end{equation}
By an operator bracketing argument one concludes that $\mathcal{N}_{\varepsilon_0 -E}(A_{R,D}) \leq \mathcal{N}_{\varepsilon_0 -E}(H)$.

Let $\varepsilon>0$, then by \eqref{decay} we can choose $R$ large enough to estimate
\[
V(x)\le v\big(d_S(x)\big)+ \dfrac{\varepsilon}{|x|^2} \text{ for } x\in\Omega_R.
\]
Hence the form $a_{R,D}$ is bounded from above by the form $b_{R,D}$,
\begin{equation}
\label{formb}
\begin{split}
&b_{R,D}[\varphi]:=\int_{\Omega_R} \left(|\nabla \varphi|^2+ \Big(v\big(d_S(x)\big)+ \dfrac{\varepsilon}{|x|^2}\Big)|\varphi|^2\right)\, \ud x\,,\\
&\quad\cD\left(b_{R,D}\right):=\{ \varphi \in H^1_0(\Omega_R): \ \sqrt{|v\circ d_S|}\varphi\in L^2(\Omega_R)\}\,,
\end{split}
\end{equation}
and then $\mathcal{N}_{\varepsilon_0 -E}(B_{R,D})\le \mathcal{N}_{\varepsilon_0 -E}(A_{R,D})$, which then implies
$\mathcal{N}_{\varepsilon_0 -E}(B_{R,D})\le \mathcal{N}_{\varepsilon_0 -E}(H)$.

In a next step one employs the diffeomorphism $\Phi$ in \eqref{TubularCoordinates}
and the associated unitary map
\[
U: L^2(\Omega_R)\to L^2(\Pi_{R,\delta}),
U\varphi:=\sqrt{\det \Phi'} \varphi\circ \Phi.
\]
A standard computation (change of variables) shows that the form $c_{R,\delta}$, $c_{R,\delta}[U\varphi]:=b_{R,D}[\varphi]$, is given by
\begin{equation}\label{EqUnitaryForm}
\begin{split}
c_{R,\delta}[\varphi]:&=\int_{\Pi_{R,\delta}} |\partial_r \varphi|^2 +\frac{1}{(r+t\kappa)^2}\left(|\partial_s \varphi|^2-\frac{\kappa^2+1}{4} |\varphi|^2 \right)+|\partial_t \varphi|^2\\
&\ +\left(v(t)+\frac{\varepsilon}{r^2+t^2}\right) |\varphi|^2+\left(\frac{t\kappa^{\prime \prime}}{2(r+t\kappa)^3}-\frac{5}{4}\frac{(t\kappa^{\prime})^2}{(r+t\kappa)^4}\right) |\varphi|^2 \, \ud r \, \ud s \ud t\, ,\\
\cD(c_{R,\delta}):&= \big\{\varphi \in L^2(\Pi_{R,\delta}): \ \partial_r\varphi,\, r^{-1}\partial_s\varphi,\,\partial_t\varphi,\,\sqrt{|v(t)|}\varphi \in L^2(\Pi_{R,\delta}),\\
&\qquad \ \varphi=0 \ \text{on} \ \partial \Pi_{R,\delta} \big\}.
\end{split}
\end{equation}
%
As a result, $\mathcal{N}_{\varepsilon_0 -E}(C_{R,\delta})\le \mathcal{N}_{\varepsilon_0 -E}(H)$.

Then, in addition, one introduces $\widehat{\Pi}_{R,\delta}:=(1,+\infty) \times \mathbb{T} \times (-\delta R,+\delta R)$ and a coordinate transformation 
\begin{equation}
\label{wphi}
\begin{split}
&\widehat{\Phi}:\Pi_{R,\delta}\rightarrow\widehat{\Pi}_{R, \delta}\,,\\
&\quad (r,s,t)\rightarrow (\rho,s,t)\,, \quad \rho=\frac{r-\delta R \kappa_{\infty}}{R(1-\delta \kappa_{\infty})}\,.
\end{split}
\end{equation}
Notice that $\rho\geq 1$ by the choice of $\delta_0$ in the beginning of Section \ref{SectionLower}.
Then rewriting the form $c_{R,\delta}$ in the new coordinates and applying simple upper bounds by controlling
$\kappa^{(j)}$ through their sup-norms (the computations
of \cite[Sec.~3]{OBP18} apply almost literally here) we show that
$\mathcal{N}_{\varepsilon_0 -E}(G_{R,\delta})\leq \mathcal{N}_{\varepsilon_0 -E}(C_{R,\delta})$,
where $G_{R,\delta}$ is the operator in $L^2(\widehat{\Pi}_{R,\delta})$ generated by the quadratic form $g_{R,\delta}$,
\begin{align*}
g_{R,\delta}[\varphi]&:=\int_{\widehat{\Pi}_{R,\delta}} |\partial_t \varphi|^2+v(t) |\varphi|^2+\frac{1}{R^2(1-\delta\kappa_{\infty})^2}|\partial_\rho \varphi|^2 \\
&\quad +\frac{1}{R^2(1-\delta\kappa_{\infty})^2\rho^2}\left(|\partial_s \varphi|^2-\frac{\kappa^2+1-C(\delta+\varepsilon)}{4} |\varphi|^2 \right)
 \, \ud \rho \, \ud s \, \ud t\,,\\
\cD(g_{R,\delta})&:=\{\varphi \in L^2(\widehat{\Pi}_{R,\delta}): \partial_{\rho}\varphi, \rho^{-1}\partial_s\varphi,\partial_t\varphi, \sqrt{|v(t)|}\varphi \in L^2(\Pi_{R,\delta}),\\
&\qquad \ \varphi=0 \ \text{on}\ \partial \widehat{\Pi}_{R,\delta} \}\,. 
\end{align*}
with $C> 0$ being some fixed constant. It follows 
\[
\mathcal{N}_{\varepsilon_0 -E}(G_{R,\delta})\leq \mathcal{N}_{\varepsilon_0 -E}(H),
\]
and hence it is sufficient to show that
\begin{equation*}
\liminf_{E \rightarrow 0^+}\frac{\mathcal{N}_{\varepsilon_0 -E}(G_{R,\delta})}{|\ln E|} \geq k_S\, .
\end{equation*}
It is easy to see that $G_{R,\delta}$ commutes with $\eins\otimes(\mathcal{K}_S\otimes\eins)$ and $\eins\otimes(\eins\otimes H_{\delta R,D})$ and one identifies $L^2(\widehat{\Pi}_{R,\delta})\simeq L^2(1,+\infty)\otimes L^2(\mathbb{T})\otimes L^2(-\delta R,\delta R)$. The decomposition with respect to the eigenbases of $\mathcal{K}_S$
and $H_{\delta R,D}$ implies the unitary equivalence  
\begin{equation*}
G_{R,\delta} \simeq \bigoplus_{m,n \in \mathbb{N}}\left(\frac{1}{R^2(1-\delta\kappa_{\infty})^2}G^{[m]}_{R,\delta}+\lambda_n(H_{\delta R,D})  \right)\, ,
\end{equation*}
where $\lambda_n(H_{\delta R,D})$ is the $n$-th eigenvalue of the operator $H_{\delta R,D}$. Also, $G^{[m]}_{R,\delta}$ is the operator associated with the quadratic form 
\begin{equation*}
\begin{split}
&g^{[m]}_{R,\delta}[f]:=\int_{1}^{+\infty}\left(|f^{\prime}|^2+\frac{\lambda_m(\mathcal{K}_S)-\frac{1-C(\delta+\varepsilon)}{4}}{\rho^2}|f|^2  \right)\, \ud \rho\,,\\
&\quad\cD(g^{[m]}_{R,\delta}):=H^1_0(1,+\infty)\,. 
\end{split}
\end{equation*}
Setting 
\begin{equation*}
M_{\delta}:=\max\{m\in \mathbb{N}: \ \lambda_m(\mathcal{K}_S)-\frac{1-C(\delta+\varepsilon)}{4}< 0   \}
\end{equation*}
we estimate 
\begin{equation}
\label{ineqn}\begin{split}
\mathcal{N}_{\varepsilon_0 -E}(G_{R,\delta})&=\sum_{m,n\in \mathbb{N}}\mathcal{N}_{\varepsilon_0 -E}\left(\frac{1}{R^2(1-\delta\kappa_{\infty})^2}G^{[m]}_{R,\delta}+\lambda_n(H_{\delta R,D})  \right) \\
&\geq \sum_{m=1}^{M_{\delta}}\mathcal{N}_{\varepsilon_0 -E}\left(\frac{1}{R^2(1-\delta\kappa_{\infty})^2}G^{[m]}_{R,\delta}+\lambda_1(H_{\delta R,D})  \right) \\
&=\sum_{m=1}^{M_{\delta}}\mathcal{N}_{\varepsilon_0 -E-\lambda_1(H_{\delta R,D})}\left(\frac{1}{R^2(1-\delta\kappa_{\infty})^2}G^{[m]}_{R,\delta} \right)\\
&=\sum_{m=1}^{M_{\delta}}\mathcal{N}_{-\mu_\delta(E)}(G^{[m]}_{R,\delta}),
\end{split}
\end{equation}
where we denote
\[
\mu_{\delta}(E):=\left(E-\varepsilon_0+\lambda_1(H_{\delta R, D})  \right)R_{\delta}(E)^2(1-\delta\kappa_{\infty})^2\, .
\]
Up to now, $R> 0$ was a fixed constant. It is now customary to make it actually dependent on $E$ such that $R=R_{\delta}(E):=K_{\delta}|\ln E|$ with $K_\delta > 0$, some constant that will be specified shortly. By Proposition~\ref{propdir} we know that 
\begin{equation*}
\left|\varepsilon_0-\lambda_1(H_{\delta R_{\delta}(E), D})   \right| \leq C^{-1}_0\mathrm{e}^{-C_0K_{\delta}\delta|\ln E|}=C^{-1}_0E^{C_0K_{\delta}\delta}\, .
\end{equation*}
Hence, choosing $K_{\delta} > 0$ large enough, we conclude that
\begin{equation*}
\mu_{\delta}(E)=(1-\delta\kappa_{\infty})^2K^2_{\delta}|\ln E|^2E+\orr(E|\ln E|^2) \ \text{for}\ E \rightarrow 0^+\, .
\end{equation*}
Applying Proposition~\ref{KirschSimon} to each summand in the last term of \eqref{ineqn} we obtain
\begin{equation*}\begin{split}
\liminf_{E \rightarrow 0^+}\frac{\mathcal{N}_{\varepsilon_0 -E}(G_{R,\delta})}{|\ln E|} &\geq \sum_{m=1}^{M_{\delta}} \lim_{E \rightarrow 0^+}\frac{\mathcal{N}_{-\mu_{\delta}(E)}(G^{[m]}_{R,\delta})}{|\ln E|} \\
&=\sum_{m=1}^{M_{\delta}} \lim_{E \rightarrow 0^+}\left(\frac{\mathcal{N}_{-\mu_{\delta}(E)}(G^{[m]}_{R,\delta})}{|\ln \mu_{\delta}(E)|} \frac{|\ln \mu_{\delta}(E)|}{|\ln E|}\right) \\
&=\sum_{m=1}^{M_{\delta}} \lim_{E \rightarrow 0^+}\frac{\mathcal{N}_{-\mu_{\delta}(E)}(G^{[m]}_{R,\delta})}{|\ln \mu_{\delta}(E)|} \\
&=\frac{1}{2\pi}\sum_{m=1}^{M_{\delta}} \sqrt{\left(\lambda_m(\mathcal{K}_S)-\frac{C}{4}(\delta+\varepsilon)\right)_+}\, .
\end{split}
\end{equation*}
Now, the above inequality implies
\begin{equation*}\begin{split}
\liminf_{E \rightarrow 0^+}\frac{\mathcal{N}_{\varepsilon_0 -E}(H)}{|\ln E|} &\geq \frac{1}{2\pi}\sum_{m=1}^{M_{\delta}} \sqrt{\left(\lambda_m(\mathcal{K}_S)-\frac{C}{4}(\delta+\varepsilon)\right)_+}.
\end{split}
\end{equation*}
Since both $\delta$ and $\varepsilon$ can be chosen arbitrarily small,
the result follows.
\section{Proof of Theorem~\ref{MainTheorem}: Upper bound} \label{SectionUpper}
In this section we aim to prove that
\begin{equation*}
\limsup_{E \rightarrow 0^+}\frac{\mathcal{N}_{\varepsilon_0-E}(H)}{|\ln E|} \leq k_S\, .
\end{equation*}
In order to do this we will work on a different neighborhood of $S$. More explicitly, we introduce 
\begin{equation*}\begin{split}
\mathcal{P}_{R,\delta}:&=\big\{(r,t) \in \mathbb{R}^2: r > R,\ t \in (-\delta r,+\delta r)  \big\} \\
&=\big\{(r,t) \in \mathbb{R}^2: r > r_{R,\delta}(t)  \big\}\,, \quad
r_{R,\delta}(t):=\max\left(R,\frac{|t|}{\delta}  \right).
\end{split}
\end{equation*}
We set $\mathcal{V}_{R,\delta}:=\mathcal{P}_{R,\delta} \times \mathbb{T}$. Then, there exists $R_0 > 0$ and $\delta_0 \in (0,\kappa^{-1}_{\infty})$ such that the map 
\begin{equation}\label{LambdaMap}
\begin{split}
&\Lambda:\mathcal{V}_{R,\delta} \rightarrow \mathbb{R}^3\,,\\
& \quad \Lambda(r,s,t)\rightarrow r\Gamma(s)+tn(s)
\end{split}
\end{equation}
is injective for all $\delta \in (0,\delta_0)$ and $R > R_0$. On the domain $\Omega_{R,\delta}:=\Lambda(\mathcal{V}_{R,\delta})$ we then introduce the quadratic form
\begin{equation*}
\begin{split}
&a_{R,\delta}[\varphi]=\int_{\Omega_{R,\delta}}\big(|\nabla \varphi|^2+ V|\varphi|^2\big)\, \ud x\,,\\
&\quad\cD(a_{R,\delta})=\{\varphi \in H^1(\Omega_{R,\delta}):\int_{\Omega_{R,\delta}} V|\varphi|^2\, \ud x < \infty\}\,. 
\end{split}
\end{equation*}
In order to show that $a_{R,\delta}$ can be used to obtain the upper bound for the eigenvalue counting function
one uses the standard operator bracketing argument. Namely, choose $R_1>R$ in such a way that
$V(x)>\varepsilon$ for $x\notin B(0,R_1)\,\cup \Omega_{R,\delta}$.
We then denote $\Omega_0:=B(0,R_1)\setminus \Omega_{R,\delta}$ and $\Omega_1:=\rz^2\setminus \overline{B(0,R_1)\,\cup \Omega_{R,\delta}}$ and consider the quadratic form
\begin{equation*}
\begin{split}
&\Hat h[\varphi]=\int_{\Omega_{R,\delta}\cup \Omega_0\cup\Omega_1}\big(|\nabla \varphi|^2+ V|\varphi|^2\big)\, \ud x\,,\\
&\quad\cD(\Hat h_{R,\delta})=\{\varphi \in H^1(\Omega_{R,\delta}\cup \Omega_0\cup\Omega_1):\int_{\Omega_{R,\delta}\cup \Omega_0\cup\Omega_1} V|\varphi|^2\, \ud x < \infty\},
\end{split}
\end{equation*}
which is an extension of the quadratic form for $H$. Then a standard application of the minimax principle shows that
$\cN_{\varepsilon_0-E}(H)\le \cN_{\varepsilon_0-E}(\Hat H)$ for any $E>0$. We further remark that
$\Hat H$ represents as the direct sum of three operators acting in $L^2(\Omega_{R,\delta})$, $L^2(\Omega_0)$, $L^2(\Omega_1)$,
$\Hat H=A_{R,\delta}\oplus A_0\oplus A_1$, hence, 
\[
\cN_{\varepsilon_0-E}(\Hat H)=\cN_{\varepsilon_0-E}(A_{R,\delta})+\cN_{\varepsilon_0-E}(A_0)+\cN_{\varepsilon_0-E}(A_1), \quad E>0.
\]
Due to the choice of $R_1$ one has $A_1\ge \varepsilon_0$, hence the last summand is zero.
The operator $A_1$ has compact resolvent, hence, for any $E>0$ we have $\cN_{\varepsilon_0-E}(A_0)\le \cN_{\varepsilon_0}(A_0)<\infty$. To summarize, for any $R>R_0$ and $\delta\in(0,\delta_0)$ there is $C>0$ such that
	$	\mathcal{N}_{\varepsilon_0-E}(H)\leq \mathcal{N}_{\varepsilon_0-E}(A_{R,\delta})+C$.

In addition, if one takes any $\varepsilon>0$, then by \eqref{decay} we can assume $R_0$ large enough to estimate
\[
V(x)\ge v\big(d_S(x)\big)- \dfrac{\varepsilon}{|x|^2} \text{ for } x\in\Omega_{R,\delta} \text{ and } R>R_0.
\]
Hence, $a_{R,\delta}[\varphi]\ge \Hat a_{R,\delta}[\varphi]$, where the quadratic form $\Hat a_{R,\delta}$ is defined by
\begin{equation*}
\begin{split}
&\Hat a_{R,\delta}[\varphi]=\int_{\Omega_{R,\delta}}\bigg(|\nabla \varphi|^2+ \Big( v\big(d_S(x)\big)
-\dfrac{\varepsilon}{|x|^2}\Big)|\varphi|^2\bigg)\, \ud x\,,\\
&\quad\cD(\Hat a_{R,\delta})=\{\varphi \in H^1(\Omega_{R,\delta}):\int_{\Omega_{R,\delta}} v\circ d_S\,|\varphi|^2\, \ud x < \infty\}\,,
\end{split}
\end{equation*}
one concludes that for any $R>R_0$ and $\delta\in(0,\delta_0)$ there is $C>0$ such that
	\begin{equation*}
	\mathcal{N}_{\varepsilon_0-E}(H)\leq \mathcal{N}_{\varepsilon_0-E}(\Hat A_{R,\delta})+C \text{ for any } E>0.
	\end{equation*}
By applying the same coordinate transformations as for the upper bound, i.e. first the passage to the tubular coordinates
$(r,s,t)$, and some simple lower bounds (the computations of \cite[Section~4.1]{OBP18} apply almost literally), one sees that $\mathcal{N}_{\varepsilon_0-E}(\Hat A_{R,\delta})\le \mathcal{N}_{\varepsilon_0-E}(G_{R,\delta})$
with $G_{R,\delta}$ being the operator in $L^2(\mathcal{V}_{R,\delta})$ 
\begin{equation*}
\begin{split}
&g_{R,\delta}[\varphi]:=\int_{\mathcal{V}_{R,\delta}}\bigg(|\partial_t \varphi|^2 +v(t)|\varphi|^2+|\partial_r \varphi|^2\\
&\qquad\qquad+\frac{|\partial_s \varphi|^2-\frac{\kappa^2+1+A(\delta+\varepsilon)}{4}|\varphi|^2}{r^2(1+2\delta \kappa_{\infty})^2}\bigg)\, \ud r \, \ud s \, \ud t\,,\\
&\cD(g_{R,\delta}):=\big\{\varphi \in L^2(\mathcal{V}_{R,\delta}):\partial_{r}v,\,\partial_{t}v,\,r^{-1}\partial_{s}v,\,\sqrt{|v(t)|}\varphi \in L^2(\mathcal{V}_{R,\delta}) \big\},
\end{split}
\end{equation*}
and $A>0$ is a suitable fixed constant. More precisely, there exist $\delta'>0$ and $R'>0$ such that for any $\delta \in (0,\delta')$ and $R > R'$ 
there exists $C>0$ with
\begin{equation*}
\mathcal{N}_{\varepsilon_0-E}(H)\leq \mathcal{N}_{\varepsilon_0-E}(G_{R,\delta})+C
\text{ for any } E>0.
\end{equation*}
The operator $G_{R,\delta}$ clearly commutes with $\eins\otimes\mathcal{K}_S$ where here one identifies $L^2(\mathcal{V}_{R,\delta})\simeq L^2(\mathcal{P}_{R,\delta})\otimes L^2(\mathcal{T})$, and
\begin{equation*}
G_{R,\delta}\simeq\bigoplus_{n\in\nz}G^{[n]}_{R,\delta}\,,
\end{equation*}
where $G^{[n]}_{R,\delta}$ is the operator in $L^2(\mathcal{P}_{R,\delta})$ given by its quadratic form
\begin{equation*}
\begin{split}
&g^{[n]}_{R,\delta}[\varphi]:=\int_{\mathcal{P}_{R,\delta}}\left(|\partial_t \varphi|^2+v(t)|\varphi|^2+|\partial_r \varphi|^2+\frac{\lambda_n(\mathcal{K}_S)-\frac{1+A(\delta+\varepsilon)}{4}}{r^2(1+2\delta \kappa_{\infty})^2}|\varphi|^2  \right)\, \ud r \, \ud t\,,\\
&\quad \cD(g^{[n]}_{R,\delta})=\{\varphi \in H^1(\mathcal{P}_{R,\delta}): \sqrt{|v(t)|}\varphi \in L^2(\mathcal{P}_{R,\delta}) \}\,.
\end{split}
\end{equation*}
Consequently,
\begin{equation*}
\mathcal{N}_{\varepsilon_0-E}(G_{R,\delta})=\sum_{n\in \mathbb{N}}\mathcal{N}_{\varepsilon_0-E}(G^{[n]}_{R,\delta})\ , \ E > 0\, .
\end{equation*}
By Proposition~\ref{propneu} we can assume $R$ large to have
\begin{multline*}
g^{[n]}_{R,\delta}[\varphi] \geq \int_{R}^{\infty}\int_{-\delta r}^{+ \delta r}
\bigg(
|\partial_r \varphi|^2
+
\Big(
\frac{\lambda_n(\mathcal{K}_S)-\frac{1+A(\delta+\varepsilon)}{4}}{r^2(1+2\delta \kappa_{\infty})}  + \lambda_1(H_{\delta r, N})  \Big) |\varphi|^2
\bigg)\, \ud t \, \ud r \, \\
\geq
\int_{R}^{\infty}\int_{-\delta r}^{+ \delta r}
\bigg(
|\partial_r \varphi|^2
+
\Big(
\frac{\lambda_n(\mathcal{K}_S)-\frac{1+A(\delta+\varepsilon)}{4}}{r^2(1+2\delta \kappa_{\infty})}  +
\varepsilon_0-C_0^{-1}\mathrm{e}^{-C_0\delta r}  \Big) |\varphi|^2
\bigg)\, \ud t \, \ud r.
\end{multline*}
We set 
\begin{equation*}
N:=\Big\{n \in \mathbb{N}: \ \lambda_n(\mathcal{K}_S)-\frac{1+A(\delta+\varepsilon)}{4}  \leq 0\Big\}\,,
\end{equation*}
which implies that, choosing $R$ large enough, 
\begin{equation*}
\frac{\lambda_n(\mathcal{K}_S)-\frac{1+A(\delta+\varepsilon)}{4}}{r^2(1+2\delta \kappa_{\infty})} -C_0^{-1}\mathrm{e}^{-C_0\delta r} \geq 0, \quad r > R,\quad n \geq N+1\, .
\end{equation*}
It follows that $G^{[n]}_{R,\delta} \geq \varepsilon_0$ for $n\geq N+1$ and therefore
\begin{equation*}
\mathcal{N}_{\varepsilon_0-E}(g_{R,\delta})=\sum_{n=1}^{N}\mathcal{N}_{\varepsilon_0-E}(G^{[n]}_{R,\delta})\ , \quad E > 0\ .
\end{equation*}
To treat the terms $\mathcal{N}_{\varepsilon_0-E}(G^{[n]}_{R,\delta})$ we introduce a parameter $L > 1$, denote by $m$ the integer part of $\sqrt{L}$ and write
\begin{equation*}\begin{split}
&r_p:=R+\frac{pL}{m}\,,\ t_p:=\delta r_p\,, \ p \in \{0,...,m\}\,,\ r_{m+1}:=+\infty\,,\\
&\Omega_p:=\{ (r,t) \in \mathbb{R}^2 : \ r\in (r_p,r_{p+1})\,,\ t\in (t_p,t_{p+1})\} \subset \mathcal{P}_{R,\delta}\,, \ p\in \{0,...,m\}\, ,\\
& \Omega_{m+1}:=\mathcal{P}_{R,\delta} \setminus \overline{\bigcup_{p=0}^{m} \Omega_p}\, .
\end{split}
\end{equation*}
We then introduce, for $p \in \{0,...,m \}$, the quadratic forms
\begin{equation*}
\begin{split}
&h^{[n]}_{p,\delta}[\varphi]:=\int_{\Omega_p} \left(|\partial_r \varphi|^2+|\partial_t \varphi|^2+v(t)|\varphi|^2+ \frac{\lambda_n(\mathcal{K}_S)-\frac{1+A(\delta+\varepsilon)}{4}}{r^2(1+2\delta \kappa_{\infty}) } |\varphi|^2\right)\, \ud r \, \ud t\,,\\
&\quad\cD(h^{[n]}_{p,\delta}):=\{\varphi \in H^1(\Omega_p): \ \sqrt{|v(t)|}\varphi \in L^2(\Omega_p)  \}
\end{split}
\end{equation*}
and set 
\begin{equation*}
\begin{split}
&h^{[n]}_{m+1,\delta}[\varphi]:=\int_{\Omega_{m+1}} \left(|\partial_r \varphi|^2+ v(t)|\varphi|^2+ \frac{\lambda_n(\mathcal{K}_S)-\frac{1+A(\delta+\varepsilon)}{4}}{r^2(1+2\delta \kappa_{\infty}) } |\varphi|^2\right)\, \ud r \, \ud t\,,\\
&\quad \cD(h^{[n]}_{m+1,\delta}):=\{\varphi \in H^1(\Omega_{m+1}): \ \sqrt{|v(t)|}\varphi \in L^2(\Omega_{m+1})  \}.
\end{split}
\end{equation*}
We have then
\begin{equation*}
g^{[n]}_{R,\delta}\geq \bigoplus_{p=0}^{m+1}h^{[n]}_{p,\delta},
\qquad
\mathcal{N}_{\varepsilon_0-E}(G^{[n]}_{R,\delta}) \leq \sum_{p=0}^{m+1}\mathcal{N}_{\varepsilon_0-E}(H^{[n]}_{p,\delta})\, .
\end{equation*}
Due to the assumption $(iii)$ on $v$ we can assume that $v(t)\ge\varepsilon_1$ for $(r,t)\in \Omega_{m+1}$, where
$\varepsilon_1>\varepsilon_0$ is fixed, which implies $H^{[n]}_{m+1,\delta}(v)\geq\varepsilon_0$
and 
\begin{equation*}
\mathcal{N}_{\varepsilon_0-E}(h^{[n]}_{m+1,\delta})=0\quad \text{for} \quad n\in \big\{1,...,N\big\} \quad \text{and} \quad E > 0\, .
\end{equation*}
We can now assume that $p \in \{0,...,m-1\}$. The important thing is that we can employ a separation of variables due to the definition of $\Omega_p$. We set
\begin{equation*}
\varepsilon_{p,\delta}:=\left|\frac{\lambda_1(\mathcal{K}_S)-\frac{1}{4}\big(1+A(\delta+\varepsilon)\big)}{r^2_p(1+\delta \kappa_{\infty})}   \right|
\end{equation*}
and conclude that 
\begin{equation*}
h^{[n]}_{p,\delta}[\varphi] \geq \int_{\Omega_p} \left(|\partial_r \varphi|^2+|\partial_t \varphi|^2+v(t)|\varphi|^2 \right)\, \ud r \, \ud t-\varepsilon_{p,\delta}\|\varphi\|^2_{L^2(\Omega_p)}\,.
\end{equation*}
The next goal is to show that only the term $\mathcal{N}_{\varepsilon_0-E}(h^{[n]}_{m,\delta})$ dictates the leading asymptotic behavior of $\mathcal{N}_{\varepsilon_0-E}(g^{[n]}_{m,\delta})$. For that, note that a separation of variables yields the decomposition 
\[
H^{[n]}_{p,\delta}=N_p\otimes\eins+\eins\otimes H_{t_p,N}\,,
\]
where $N_p$ is the Neumann Laplacian on $L^2(r_p,r_{p+1})$. One gets
\begin{equation*}
\mathcal{N}_{\varepsilon_0-E}(h^{[n]}_{p,\delta})\leq \mathcal{N}_{\varepsilon_0}(h^{[n]}_{p,\delta})\leq \# \Big\{(l,j) \in \mathbb{N}_0 \times \mathbb{N}: \frac{m^2\pi^2l^2}{L^2} \leq \varepsilon_0 +\varepsilon_{p,\delta}-\lambda_j(H_{t_p,N})   \Big\}\, .
\end{equation*}
Due to Corollary~\ref{corlow} we can choose $R$ so large that $\lambda_j(H_{t_p,N}) \geq \varepsilon_0+\widetilde{\varepsilon}$ for some $\widetilde{\varepsilon} > 0$ and $j \geq 2$. We now increase $R$ such that $\varepsilon_{p,\delta} < \widetilde{\varepsilon}$ for $p\in \{0,...,m-1  \}$. Therefore, 
\begin{equation*}\begin{split}
 &\# \Big\{(l,j) \in \mathbb{N}_0 \times \mathbb{N}: \frac{m^2\pi^2l^2}{L^2} \leq \varepsilon_0 +\varepsilon_{p,\delta}-\lambda_j(H_{t_p,N})   \Big\} \\
 & \qquad = \# \Big\{l \in \mathbb{N}_0: \frac{m^2\pi^2l^2}{L^2} \leq \varepsilon_0 +\varepsilon_{p,\delta}-\lambda_1(H_{t_p,N})  \Big \}
\end{split}
\end{equation*}
and the estimate of Proposition~\ref{propneu} for $\lambda_1(H_{t_p,N})$ implies
\begin{equation*}\begin{split}
\mathcal{N}_{\varepsilon_0}(h^{[n]}_{p,\delta})&\leq \# \Big\{l \in \mathbb{N}_0: \frac{m^2\pi^2l^2}{L^2} \leq \varepsilon_0 +\varepsilon_{p,\delta}-\lambda_1(H_{t_p,N})   \Big\} \\
& \leq \# \Big\{l \in \mathbb{N}_0: \frac{m^2\pi^2l^2}{L^2} \leq \varepsilon_{p,\delta}+C_0^{-1}\mathrm{e}^{-C_0t_p}   \Big\}  \\
&\leq 1+\frac{L}{\pi m}\sqrt{\varepsilon_{p,\delta}+C_0^{-1}\mathrm{e}^{-C_0t_p} }\\
& \leq 1+c^{\prime}_{R,\delta}\frac{\sqrt{L}}{r_p}\,,
\end{split}
\end{equation*}
where $c^{\prime}_{R,\delta} > 0$ is a constant independent of $n$ and $L$. Consequently, 
\begin{equation*}
\sum_{p=0}^{m-1}\mathcal{N}_{\varepsilon_0}(H^{[n]}_{p,\delta}) \leq c^{\prime \prime}_{R,\delta}\sqrt{L}\,,
\end{equation*}
where $c^{\prime \prime}_{R,\delta} > 0$ is independent of $n$ and $L$. Hence, 
\begin{equation*}
\mathcal{N}_{\varepsilon_0-E}(G^{[n]}_{R,\delta}) \leq \mathcal{N}_{\varepsilon_0-E}(H^{[n]}_{m,\delta}) +c^{\prime \prime}_{R,\delta}\sqrt{L}\ , \quad E > 0\, ,
\end{equation*}
and it remains to find a suitable upper bound to $\mathcal{N}_{\varepsilon_0-E}(H^{[n]}_{m,\delta})$. To do this we again employ a separation of variables to write
\begin{equation*}
H^{[n]}_{m,\delta}=W^{[n]}_{R,L,\delta}\otimes \mathrm{1} + \mathrm{1} \otimes H_{t_m,N}\, ,
\end{equation*}
where  $W^{[n]}_{R,L,\delta}$ is the operator associated with the quadratic form
\begin{equation*}
\begin{split}
&w^{[n]}_{R,L,\delta}[\varphi]:=\int_{R+L}^{\infty}\left(|\varphi^{\prime}|^2+ \frac{\lambda_n(\mathcal{K}_S)
-\frac{1+A(\delta+\varepsilon)}{4}}{r^2(1+2\delta \kappa_{\infty})^2}|\varphi|^2 \right)\, \ud r\,,\\
&\quad \cD(w^{[n]}_{R,L,\delta})=H^1(R+L,+\infty)\,.
\end{split}
\end{equation*}
This leads to 
\begin{equation*}
\mathcal{N}_{\varepsilon_0-E}(H^{[n]}_{m,\delta})=\# \big\{(l,j) \in \mathbb{N}\times \mathbb{N}:\ \lambda_{l}(H_{t_m,N})+\lambda_{j}(W^{[n]}_{R,L,\delta}) \leq \varepsilon_0-E   \big\}\, .
\end{equation*}
Due to the estimate 
\begin{equation*}
w^{[n]}_{R,L,\delta}[\varphi] \geq -\left|\frac{\lambda_1(\mathcal{K}_S)-\frac{1+A(\delta+\varepsilon)}{4}}{R^2(1+2\delta \kappa_{\infty})^2}  \right| \cdot \|\varphi\|^2_{L^2(R+L,+\infty)}
\end{equation*}
and Corollary~\ref{corlow} we can choose $R$ so large that
\begin{equation*}
\mathcal{N}_{\varepsilon_0-E}(H^{[n]}_{m,\delta})=\mathcal{N}_{\varepsilon_0-E-\lambda_1(H_{t_m,N})}(W^{[n]}_{R,L,\delta})\ , \quad E > 0\ .
\end{equation*}
Introducing the new variable $\rho:=(R+L)^{-1}r$ one sees that $w^{[n]}_{R,L,\delta}[\cdot]$ is unitarily equivalent
to the quadratic form $(R+L)^{-2}z^{[n]}_{\delta}$, defined on $L^2(1,+\infty)$,
\begin{equation*}
\begin{split}
&z^{[n]}_{\delta}[\varphi]:=\int_{1}^{+\infty}\left(|\varphi^{\prime}|^2+ \frac{\lambda_n(\mathcal{K}_S)-\frac{1+A(\delta+\varepsilon)}{4}}{\rho^2(1+2\delta \kappa_{\infty})^2}|\varphi|^2 \right)\, \ud \rho\,,\\
&\quad\cD(z^{[n]}_{\delta})=H^1(1,+\infty)\,.
\end{split}
\end{equation*}
Now, for $E> 0$ we set $L=L(E)=K|\ln E|$ with some constant $K> 0$ chosen shortly and denote the corresponding $m$ as $m=m(E)$. We set
\begin{equation*}
\mu(E):=(R+L(E))^2 (E-\varepsilon_0+\lambda_1(H_{\delta(R+L(E)),N}))\,,
\end{equation*}
and take into account that $\mathcal{N}_{\varepsilon_0-E}(H^{[n]}_{m,\delta})=\mathcal{N}_{-\mu(E)}(Z^{[n]}_{\delta})$, $E> 0$. Due to Proposition~\ref{propneu} we obtain
\begin{equation*}
\left|\lambda_1(H_{\delta(R+L(E)),N})-\varepsilon_0   \right| \leq C_0^{-1}\mathrm{e}^{-C_0 (R+L(E))}=C_0^{-1}\mathrm{e}^{-\delta C_0R}E^{\delta KC_0}\ .
\end{equation*}
Hence, for a sufficiently large value of $K > 0$ we conclude that 
\begin{equation*}
\mu(E)=K^2E|\ln E|^2+\orr(E |\ln E|^2)\ , \quad E \rightarrow 0^+\, .
\end{equation*}
We finally use this to obtain
\begin{equation*}\begin{split}
\limsup_{ E \rightarrow 0^+}\frac{\mathcal{N}_{\varepsilon_0-E}(g_{R,\delta}) }{|\ln E|}&=\sum_{n=1}^{N_{\delta}}\limsup_{ E \rightarrow 0^+}\frac{\mathcal{N}_{\varepsilon_0-E}(g^{[n]}_{R,\delta}) }{|\ln E|} \\
&\leq \sum_{n=1}^{N}\limsup_{ E \rightarrow 0^+}\frac{\mathcal{N}_{\varepsilon_0-E}(h^{[n]}_{R,\delta}) }{|\ln E|} +N c^{\prime \prime}_{R,\delta}\limsup_{ E \rightarrow 0^+}\frac{\sqrt{K|\ln E|}}{|\ln E|} \\
&=\sum_{n=1}^{N}\limsup_{ E \rightarrow 0^+}\frac{\mathcal{N}_{-\mu(E)}(z^{[n]}_{\delta})}{|\ln E|}  \\
&= \sum_{n=1}^{N}\limsup_{ E \rightarrow 0^+}\frac{\mathcal{N}_{-\mu(E)}(z^{[n]}_{\delta})}{|\ln \mu(E)|} \cdot \limsup_{ E \rightarrow 0^+}\frac{|\ln \mu(E)|}{|\ln E|} \\
&=\sum_{n=1}^{N}\limsup_{ E \rightarrow 0^+}\frac{\mathcal{N}_{-\mu(E)}(z^{[n]}_{\delta})}{|\ln \mu(E)|}.
\end{split}
\end{equation*}
Applying Proposition~\ref{KirschSimon} to each summand we obtain
\begin{multline*}
\limsup_{ E \rightarrow 0^+}\frac{\mathcal{N}_{\varepsilon_0-E}(H) }{|\ln E|}
\le
\limsup_{ E \rightarrow 0^+}\frac{C+ \mathcal{N}_{\varepsilon_0-E}(G_{R,\delta}) }{|\ln E|}
=
\limsup_{ E \rightarrow 0^+}\frac{\mathcal{N}_{\varepsilon_0-E}(g_{R,\delta}) }{|\ln E|}\\
\leq\frac{1}{2\pi(1+2\delta \kappa_{\infty})}\sum_{n=1}^{N}\sqrt{
\left(
\dfrac{A(\delta+\varepsilon)}{4}-\kappa_\infty\delta-\kappa_\infty^2\delta^2-\lambda_n(\mathcal{K}_S)
\right)_+}\, ,
\end{multline*}
Since $\delta$ and $\varepsilon$ can be chosen arbitrarily small, we have the result.

\section{On the essential spectrum}\label{SectionEssential}

In this section we finish the proof of Theorem~\ref{MainTheorem} characterizing the essential part of the spectrum of $H$. In a first step we observe that 
\begin{equation*}
\inf \sigma_{ess}(H)=\varepsilon_0
\end{equation*}
is a direct consequence of the results from the previous sections. 

In order to show that $[\varepsilon_0,\infty) \subset \sigma_{ess}(H)$ we construct a suitable Weyl sequence $(\varphi_n)_{n \in \mathbb{N}} \subset \cD(H)$ in tubular coordinates. Let $\chi:\mathbb{R} \rightarrow \mathbb{R}$ be a smooth function such that $\chi(x)=0$ for $x \in (-\infty,0)$ and $\chi(x)=1$ for $x \in (1,\infty)$. Let $\varphi_0$ be the ground state of the one-dimensional operator $Q$. Using the map $\Lambda$ from \eqref{LambdaMap} consider
the unitary transform 
\[
V:L^2(\Omega_{R,\delta})\to L^2({V}_{R,\delta}), \quad V \varphi= \sqrt{\det\Lambda'} \,\varphi\circ\Lambda.
\]
Then a direct computation (change of variables) shows that if $\varphi$ is supported in $\Omega_{R,\delta}$, then
\begin{multline*}
V H\varphi_n=\Big[-\frac{\partial^2}{\partial r^2}-\frac{\partial}{\partial s}\left(\frac{1}{(r+t\kappa)^2}\frac{\partial}{\partial s}  \right)-\frac{\partial^2}{\partial t^2}\\
+v(t)+w\circ \Lambda (r,s,t)
+\widehat{w}(r,s,t)  \Big]V \varphi_n,
\end{multline*}
with 
\begin{equation*}
\widehat{w}(r,s,t)  :=\frac{t\kappa^{\prime \prime}}{2(r+t\kappa)^3}-\frac{5}{4}\frac{(t\kappa^{\prime})^2}{(r+t\kappa)^4}-\frac{\kappa^2+1}{4(r+t\kappa)^2}\, .
\end{equation*}

For $n > R$ we then define
\begin{equation*}
(V \varphi_n)(r,s,t):=\varphi_0(t)\cos(kr)\chi(r-n)\chi(2n-r)\chi(t+\delta n)\chi(\delta n- t).
\end{equation*}
A direct calculation then shows that, for any $k \geq 0$, 
\begin{equation}\label{WeylCondition}\begin{split}
\lim_{n \rightarrow \infty}\frac{\|\left(H-(k^2+\varepsilon_0)\right)\varphi_n\|^2_{L^2(\mathbb{R}^3)}}{\|\varphi_n\|^2_{L^2(\mathbb{R}^3)}}&=\lim_{n \rightarrow \infty}\frac{\|\left(V H-(k^2+\varepsilon_0)\right)V\varphi_n\|^2_{L^2(\mathcal{V}_{R,\delta})}}{\|V\varphi_n\|^2_{L^2(\mathcal{V}_{R,\delta})}} \\
&=0\, ,
\end{split}
\end{equation}
which proves that $k^2+\varepsilon_0 \in \sigma(H)$. Hence, $[\varepsilon_0,+\infty)\subset \sigma(H)$.
As this set has no isolated points, the claim follows. 


\end{document}